\theoremstyle{plain}
\newtheorem{theorem}{Theorem}
\newtheorem{dfn}{Definition}
\newtheorem{lemma}{Lemma}
\newtheorem{assume}{Assumption}
\newtheorem{remark}{Remark}
\newtheorem{corollary}{Corollary}
\begin{document}

\title{\bf Optimal Quasi-Bayesian reduced rank regression with incomplete response}

\author{The Tien Mai\footnote{Corresponding author, Email: the.t.mai@ntnu.no} $\,^{ , (1)} $ \& Pierre Alquier$^{(2)} $ }

\date{
\begin{small}
$^{(1)} $Department of Mathematical Sciences, 
\\
Norwegian University of Science and Technology, Norway.
\\
$^{(2)} $ RIKEN AIP, Japan.
\end{small}
}

\maketitle

\begin{abstract}
The aim of reduced rank regression is to connect multiple response variables to multiple predictors. This model is very popular, especially in biostatistics where multiple measurements on individuals can be re-used to predict multiple outputs. Unfortunately, there are often missing data in such datasets, making it difficult to use standard estimation tools. In this paper, we study the problem of  reduced rank regression where the response matrix is incomplete. We propose a quasi-Bayesian approach to this problem, in the sense that the likelihood is replaced by a quasi-likelihood. We provide a tight oracle inequality, proving that our method is adaptive to the rank of the coefficient matrix. We describe a Langevin Monte Carlo algorithm for the computation of the posterior mean. Numerical comparison on synthetic and real data show that our method are competitive to the state-of-the-art where the rank is chosen by cross validation, and sometimes lead to an improvement.
\end{abstract}

\paragraph*{Keywords:}reduced rank regression, low-rank matrix, PAC-Bayesian bound, Langevin Monte Carlo, missing data.

\section{Introduction}
Reduced rank regression (RRR) is a popular model widely used in studying the relationship between multiple response variables and a set of predictors \cite{anderson1951estimating,izenman1975reduced,izenman2008modern,velu2013multivariate}. Connecting $m$ predictors to $p$ response variables via a linear relation requires the estimation of $mp$ coefficients, which might be difficult or impossible when $p$ and/or $m$ is large. RRR uses a low-rank constraint to reduce the dimension of the problem. This dimension reduction makes accurate estimation possible in large dimension \cite{bunea2011optimal}. Moreover, the span of the regression matrix can receive a nice interpretation in terms of a small number of latent variables explaining all the response variables. Various methods were proposed for estimation in RRR, including frequentist and Bayesian approaches \cite{geweke1996bayesian,kleibergen2002priors,corander2004bayesian,schmidli2019bayesian,alquier2013bayesian}. It was also proposed to add more constraints in the model, such as sparsity \cite{goh2017bayesian,chakraborty2020bayesian,yang2020fully}.

Bayesian RRR has been successfully applied in many applications, for example in biostatistics and genomics \cite{marttinen2014assessing,zhu2014bayesian}. However, in this field, most datasets have missing data \cite{lee2002transcriptional,marttinen2014assessing} and data with missing value are often removed or imputed before applying the model. However, incorrect imputation might result in estimation bias, and we are not aware of any theoretical guarantees for these procedures in this context. From a theoretical perspective, RRR is a special case of the so-called trace regression model. In \cite{koltchinskii2011nuclear}, the authors study penalized least-square estimation in the trace regression model, and prove tight oracle inequalities. The reference \cite{luo2018leveraging} studied some penalized likelihood estimation procedure for RRR with incomplete response. However, there is no current available Bayesian approach could deal with missing data in RRR.

In this paper, we focus on quasi-Bayesian estimation in the reduced rank regression model with missing observations, in the sense that the response matrix is incomplete. We propose to use a multivariate Student prior. We prove that the posterior mean satisfies a tight oracle inequality: even when the true rank is unknown, it converges at the optimal rate we could hope when knowing the rank. We also prove a contraction of the posterior result. We develop a Langevin Monte Carlo (LMC) approach to compute the posterior mean and to sample from the posterior. We show that it is comparable to the frequentist estimator on both simulated and real datasets.

Quasi-Bayesian estimation is an extension of the Bayesian approach where the quality of the data fit is not necessarily measured by the likelihood, but by a more general notion of risk or a quasi-likelihood. This approach is increasingly popular in generalized Bayesian inference and machine learning \cite{bissiri2013general,kno2019}. First, it allows to avoid restrictive assumptions on the data generating process. Moreover, it allows to focus on some aspects of the problem in mind (for example, prediction rather than estimation).

Our theoretical results on quasi-posteriors are derived from PAC-Bayes bounds. These bounds were introduced by \cite{McA,see2002,lan2002,mau2004,ger2009} to provide numerical generalization certificates on quasi-Bayesian estimators. They were later extended by \cite{catoni2004statistical,catonibook} as tools to provide oracle inequalities for such estimators, this approach is strongly related to the so-called ``information bounds'' of \cite{zha2006,russo2019much}. We refer the reader to \cite{gue2019,TUTO} for introductions to this topic. PAC-Bayes bounds were used to prove oracle inequalities for matrix estimation problems such as matrix completion \cite{mai2015,cottet20181} and quantum tomography \cite{mai2017pseudo}. One of the byproducts of this approach is that we don't have to assume anything about the distribution of the missing entries in the response matrix. This is in contrast with previous works on matrix completion such as \cite{candes2010matrix, koltchinskii2011nuclear} where the location of the missing entries is assumed to be uniform. These assumptions were relaxed in further works \cite{foygel2011learning,klopp2014noisy,
negahban2012restricted}, but we are not aware of any result without any assumption on this distribution.

The idea that multivariate Student priors lead to optimal rates in high-dimensional estimation problems is due to \cite{dalalyan2008aggregation,dalalyan2012sparse}. Since then, these priors were used in matrix completion \cite{yang2018fast,mai2022} and image denoising \cite{dalalyan2020exponential}. Although the scaled spectral Student prior is not conjugate in our problem, it is particularly convenient for implementing gradient-based sampling method. We propose an LMC algorithm for sampling from the (quasi) posterior and for the computation of the posterior mean. The LMC method was introduced in physics based on Langevin diffusions \cite{ermak1975computer} and became popular in statistics and machine learning following the paper \cite{roberts1996exponential}. Recent advances in the study of LMC make it particularly suitable for high-dimensional problems \cite{dalalyan2017theoretical,durmus2017nonasymptotic,durmus2019high,dalalyan2020sampling}.

The paper is organized as follows. In Section \ref{section:notations} we introduce the notations for the RRR model, the quasi-Bayesian approach and our spectral Student prior. In Section \ref{sc_theory} we prove the convergence of the posterior mean, and the contraction of the posterior. In Section \ref{sc_LMC} we describe the Langevin MC method we implemented, we then compare it to the frequentist (penalized) estimator in Section \ref{sc_simu}. All the proofs are gathered in the appendix \ref{sc_appendix_proof}.

\section{Reduced rank regression with incomplete response}
\label{section:notations}

\subsection{Notations}
The set of $n_1 \times n_2$ matrices with real coefficients is denoted by $\mathbb{R}^{n_1\times n_2}$. For any $A\in \mathbb{R}^{n_1\times n_2}$ and $I=(i,j) \in\{1,\dots,n_1\} \times \{1,\dots,n_2\}$, we denote by $A_I=A_{(i,j)}=A_{i,j}$ the coefficient on the $i$-th row and $j$-th column of $A$. We let $A^\intercal \in \mathbb{R}^{n_2\times n_1}$ denote the transpose of $A$. The matrix in $\mathbb{R}^{n_1\times n_2}$ with all entries equal to $0$ is denoted by $\mathbf{0}_{n_1 \times n_2}$. For a square matrix $B\in\mathbb{R}^{n_1 \times n_1}$ we let ${\rm Tr}(B)$ denote its trace. We denote the identity matrix in $ \mathbb{R}^{n_1 \times n_1} $ by $\mathbf{I}_{n_1}$. For $A\in \mathbb{R}^{n_1\times n_2}$, we define its sup-norm $\| A \|_{\infty} = \max_{i,j}|A_{i,j}| $; its Frobenius norm $ \|A\|_F$ is defined by $ \|A\|_F^2 = {\rm Tr}( A^\intercal A) = \sum_{i,j} A_{i,j}^2$ and ${\rm rank}(A)$ its rank. For a probability distribution $P$ on $\{1,\dots,n_1\} \times\{1,\dots,n_2\}$, we generalize this notation by $ \|A\|_{F,P}^2 = \sum_{i,j} P[(i,j)] A_{i,j}^2 $; note that when $P$ is the uniform distribution, then $ \|A\|_{F,P}^2 = \|A\|_{F}^2 /(n_1 n_2) $.

\subsection{Model}
Let $\ell$, $m$ and $p$ be integers: $\ell$ is the number of individuals, $m$ the number of explanatory variables and $p$ the number of response variables. We assume that we observe a design matrix $X \in \mathbb{R}^{\ell \times m}$ and $n$ i.i.d random pairs $ (\mathcal{I}_1, Y_1),
\ldots, (\mathcal{I}_n, Y_n) $ given by
\begin{equation}
\label{main_model}
Y_i = (XM^*)_{\mathcal{I}_i} + \mathcal{E}_i, \quad i = 1, \ldots, n
\end{equation}
where $ M^* \in \mathbb{R}^{m\times p} $ is the unknown matrix to be estimated.
The noise variables $ \mathcal{E}_i $ are assumed to be independent with
$ \mathbb{E} (\mathcal{E}_i) = 0. $ The variables $ \mathcal{I}_i $ are i.i.d copies of a random variable $ \mathcal{I} $ having distribution $ \Pi $ on the set
$ \lbrace 1, \ldots, \textcolor{red}{\ell} \rbrace \times \lbrace 1, \ldots, p \rbrace $, we put $ \Pi_{x,y} := \Pi(\mathcal{I} = (x,y)) $.
We call reduced rank regression (RRR) with missing entries the model in \eqref{main_model} under the assumption that ${\rm rank}(M^*) < \max(m,p)$.

Let us comment on this model. First, when \textcolor{red}{$\ell=m$} and $X=\mathbf{I}_p$ is the identity matrix, we recover the matrix completion as a special case~\cite{koltchinskii2011nuclear}. However, we will focus here in the case where $X$ contains explanatory variables. Also note that there are two approaches to model the observed entries of $Y$: with, or without replacement. In the matrix completion problem, both were studied, see for example \cite{candes2010matrix} for the case without replacement and \cite{koltchinskii2011nuclear} with replacement. Both settings correspond to practical applications, and the same estimation methods are used in both cases. We chose to develop our theory for i.i.d variables $ \mathcal{I}_i $, which means that it is possible in our context to observe the same entry multiple times. Note that, thanks to the results in Section 6 of~\cite{hoeffding1963probability}, our results can be extended directly to the case of sampling $\mathcal{I}_i$ without replacement on the condition that they are sampled uniformly, and that there is no observation noise: $\mathcal{E}_i=0$. Finally, technical moment assumptions on the $\mathcal{E}_i$ will be provided below. The independence assumption between these variables is standard in regression; it is possible to remove if in matrix completion~\cite{alquier2022tight}, at the cost of technical weak dependence assumptions that we will not discuss here.

Let us now state our assumptions on this model.
\begin{assume}
\label{assum_bounded}
There is a known constant $ C  < +\infty $  such that
$
 \| X M^* \|_{\infty} \leq C.
$
\end{assume}

\begin{assume}
\label{assum_noise}
The noise variables  $ \mathcal{E}_1, \ldots, \mathcal{E}_n$ are independent 
and independent of $\mathcal{I}_1,\ldots,\mathcal{I}_{n}$. There exist two known
constants $\sigma>0$ and $\xi>0$ such that
$$ \mathbb{E} (\mathcal{E}_{i}^{2})\leq \sigma^{2} $$
$$ \forall k\geq 3,\quad \mathbb{E} (|\mathcal{E}_{i}|^{k}) \leq \sigma^{2} k! \xi^{k-2}.$$
\end{assume}
Assumption \ref{assum_bounded} states that $\mathbb{E}(Y_i|\mathcal{I}=(x,y))$ is bounded for any $(x,y)$. In the case of matrix completion, $X=\mathbf{I}_p$ and thus this simply boils down to assuming that $\|M\|_{\infty} \leq C$.  Assumption \ref{assum_noise} states that the noise is sub-exponential: this include  a wide class of possible noises such as  bounded noise and Gaussian noise. We refer the reader e.g. to Chapter 2 in \cite{boucheron2013concentration} for more on sub-exponential variables. Assumption \ref{assum_bounded} and \ref{assum_noise} are both standard, they have been used in \cite{luo2018leveraging} for theoretical analysis of reduced rank regression and in \cite{koltchinskii2011nuclear} for trace regression.

The frequentist methods in this model are based on minimization of the least square criterion with a low-rank inducing penalty. There is a subtlety in our case: under Assumption \ref{assum_bounded}, we know that it makes no sense to return predictions $X M$ with entries that are outside of $[-C,C]$. However, it is extremely convenient to use an unbounded prior for $M$. Thus, we propose to use indeed unbounded distributions for $M$, but to use as a predictor a truncated version of $XM$ rather than $ M $ itself. For a matrix $A$, let 
$$
\Pi_C(A) = \arg\min_{\|B\|_{\infty} \leq C } \|A-B\|_F
$$ be the orthogonal projection of $A$ on matrices with entries bounded by $C$. Note that $B$ is simply obtained by replacing entries of $A$ larger than $C$ by $C$, and entries smaller than $-C$ by $-C$.

For a matrix $ M\in \mathbb{R}^{m\times p} $, we denote by $r(M)$ the ``empirical risk'' or least-square criterion of $M$,
\begin{equation*}
r(M) = \dfrac{1}{n} \sum\limits_{i = 1}^{n} \left( Y_i - (\Pi_C(X M))_{\mathcal{I}_i} \right) ^2.
\end{equation*}
Its expectation is denoted by
\begin{equation*}
R(M) = \mathbb{E} \left[ r(M)  \right]
 = \mathbb{E} \left[  \left( Y_1 - (\Pi_C(XM))_{\mathcal{I}_1} \right)^2  \right] .
\end{equation*}
In this paper, we will focus on the predictive aspects of the model: a matrix $M$ predicts almost as well as $M^*$ if $ R(M) - R(M^*) $ is small. Under the assumption that $\mathcal{E}_i$ has a finite variance, thanks to the Pythagorean theorem, we have
\begin{equation}
 \label{pyta}
 R(M) - R(M^*) =  \| \Pi_C(XM) - XM^*  \|^2_{F,\Pi}
\end{equation}
for any $ M $, which means that our results can also be interpreted in terms of estimation of $M^*$ with respect to a generalized Frobenius norm. 

\subsection{The Quasi-Bayesian approach}

Let $\pi$ be a prior distribution on $\mathbb{R}^{m\times p}$ (we will specify low-rank inducing priors in the Subsection \ref{sc_priors}).
For any $\lambda>0$, we define the quasi-posterior
\begin{equation*}
\hat{\rho}_{\lambda} (dM) \propto \exp(-\lambda
r(M)) \pi (dM).
\end{equation*}
Note that, for $\lambda=n/(2\sigma^2)$, this is exactly the posterior that we would be obtain for a Gaussian noise $\mathcal{E}_i \sim
\mathcal{N}(0,\sigma^2)$ (conditionally on this $\mathcal{I}_i$'s). However, our theoretical results will hold under a more general class of noise. Indeed, it is known that a small enough $\lambda$ will lead to robustness to noise misspecification \cite{grunwald2017inconsistency}. Moreover, even in the case of a Gaussian noise, in high-dimensional settings, taking $\lambda$ smaller than $\lambda=n/(2\sigma^2)$ leads to better adaptation properties \cite{dalalyan2008aggregation,dalalyan2012sparse}. We will actually specify our choice of $\lambda$ below.

The truncated posterior mean of $XM$ is given by
\begin{equation}
\label{equat_estimator}
\hat{XM}_{\lambda} 
= 
\int \Pi_C(XM) \hat{\rho}_{\lambda}(d M).
\end{equation}

\begin{remark}
Let us comment briefly on the projection $\Pi_C$. First, note that using reasonable values for $C$, the Monte Carlo algorithm we used in the simulations never sampled matrices $M$ such that $\Pi_C(X M) \neq XM$. In other words, this projection is necessary for technical reasons, but has very little impact in practice. If one wants to build an estimator of $M^*$ instead of an estimator of $XM^*$, when $X^\intercal X$ is invertible, we can simply define $\hat{M}_\lambda = (X^\intercal X)^{-1} X^\intercal \hat{XM}_{\lambda}$ and note that $X \hat{M}_\lambda =  \hat{XM}_{\lambda}$.
\end{remark}

The quasi-posterior is often referred to as the ``Gibbs posterior'' in the PAC-Bayes approach \cite{catoni2004statistical,catonibook,alquier2015properties,gue2019,TUTO}. For this reason, $\hat{M}_{\lambda}$ is sometimes referred to as the Gibbs estimator, or the exponentially weighted aggregate (EWA) \cite{dalalyan2008aggregation,rigollet2012sparse,dalalyan2018exponentially}.

\subsection{Prior specification}
\label{sc_priors}
We consider the following spectral scaled Student prior, with parameter $\tau>0$,
\begin{align}  
\label{prior_scaled_Student}
\pi(M) 
\propto 
\det (\tau^2 \mathbf{I}_{m} + MM^\intercal )^{-(p+m+2)/2}.
\end{align}
To illustrate that this prior has the potential to encourage the low-rankness of $ M $, one can check that 
\begin{align*}
\pi(M) 
 \propto 
 \prod_{j=1}^{m}  (\tau^2  + s_j(M)^2 )^{- (p+m+2)/2 },
\end{align*}
where $ s_j(M) $ denotes the $j^{th}$ largest singular value of $ M $. It is well known that the log-sum function $ \sum_{j=1}^m \log (\tau ^2  + s_j(M)^2) $ used by \cite{candes2008enhancing,yang2018fast} to enforce approximate sparsity on the singular values $s_j(M)$.  Alternatively, one can recognize a scaled Student distribution evaluated at $ s_j(M)  $ in the last display above which induces approximate sparsity on the $s_j(M)$ \cite{dalalyan2012sparse}. Thus, under the prior, most of the $s_j(M)$ are close to $0$, which means that $M$ is well approximated by a low-rank matrix.

Although this prior is not conjugate in our problem, it is particularly convenient to implement the Langevin Monte Carlo algorithms, see Section \ref{sc_LMC}. This prior has been considered before in the context of image denoising \cite{dalalyan2020exponential}. It can also be seen as the marginal distribution of the Gaussian-inverse Wishart prior that is explored in \cite{yang2018fast} in the context of matrix completion where the precision matrix is integrated out.

\section{Theoretical analysis}
\label{sc_theory}

\subsection{Main results}
In this section, we derive the statistical properties of the posterior $\hat{\rho}_\lambda$ and the mean estimator $\hat{XM}_\lambda$. Let us put 
$
 C_1 = 8(\sigma^2 +C^2) ;\, C_2 = 64C \max(\xi,C);\,
 \tau^* = \sqrt{C_1 (m+p)/(nmp \|X \|_{F}^2)}. 
$

\begin{theorem}
\label{thrm_main} 
Let Assumptions \ref{assum_bounded} and \ref{assum_noise} be satisfied. Fix the parameter $\tau= \tau^* $ in the prior. Fix $\delta>0$ and define $\lambda^* := n \min(1/(2C_2), \delta/[C_1(1+\delta)] ) $. Then, for any $\varepsilon\in(0,1)$, we have, with probability at least $1-\varepsilon$ on the sample,
\begin{multline*}
 \left\|\hat{XM}_{\lambda^*} -XM^* \right\|_{F,\Pi}^2
\leq 
\inf_{0\leq r \leq mp} \inf_{
\begin{tiny}
\begin{array}{c}
\bar{M}\in\mathbb{R}^{p\times m}
\\
{\rm rank}(\bar{M}) \leq r
\end{array}
\end{tiny}
} \Biggl\{ (1+\delta) \|X\bar{M}-X M^*  \|_{F,\Pi}^2
\\
+ 
 \frac{C_1 (1+\delta)^2}{\delta} \frac{ \left( 4 r (m+p+2) \log \left( 1+\frac{\| X \|_F \| \bar{ M } \|_F}{ \sqrt{C_1}} \sqrt{ \frac{nmp}{r(m+p)}} \right) +(m+p) + 2 \log \frac{2}{\varepsilon} \right)}{ n} \Biggr\}.
\end{multline*}
\end{theorem}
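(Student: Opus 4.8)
The plan is to follow the PAC-Bayesian route: derive an oracle inequality for the Gibbs posterior $\hat{\rho}_{\lambda}$ from an exponential-moment (Bernstein) control of the excess empirical risk, pass to the truncated posterior mean by convexity, and bound the resulting Kullback--Leibler term by exhibiting a good proposal distribution concentrated on low-rank matrices. First I would work with the centered loss increments $\Delta_i(M) = (Y_i - (\Pi_C(XM))_{\mathcal{I}_i})^2 - (Y_i - (XM^*)_{\mathcal{I}_i})^2$, so that $r(M)-r(M^*)=\frac1n\sum_i\Delta_i(M)$ and, by the Pythagorean identity \eqref{pyta}, $\mathbb{E}[\Delta_1(M)] = R(M)-R(M^*) = \|\Pi_C(XM)-XM^*\|_{F,\Pi}^2$. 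The truncation together with Assumption \ref{assum_bounded} gives $|(\Pi_C(XM))_I - (XM^*)_I|\le 2C$, and combined with the sub-exponential tails of Assumption \ref{assum_noise} this yields a Bernstein-type control of the moments of $\Delta_i$ in which the variance is dominated by a constant multiple of the mean $\mathbb{E}[\Delta_1(M)]$. This ``Bernstein/margin'' relation is exactly what produces fast rates and dictates $C_1$ and $C_2$: it gives two-sided exponential moment inequalities of the form $\mathbb{E}\exp(\pm\lambda[(R-R^*)-(r-r^*)]/n)\le \exp(c\lambda^2(R-R^*)/n^2+\cdots)$, valid only for $\lambda$ small enough, which is where the threshold $n/(2C_2)$ entering $\lambda^*$ originates.

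Next I would insert this moment bound into the Donsker--Varadhan variational formula and use that $\hat{\rho}_{\lambda}$ minimizes $\rho\mapsto \lambda\int r\,\rho(\mathrm{d}M) + \mathrm{KL}(\rho\|\pi)$. A single exponential event (hence the $\log(2/\varepsilon)$ term) then yields, with probability at least $1-\varepsilon$ and simultaneously for every $\rho\ll\pi$,
\[
\int (R(M)-R(M^*))\,\hat{\rho}_{\lambda}(\mathrm{d}M)
\le
(1+\delta)\int (R(M)-R(M^*))\,\rho(\mathrm{d}M)
+\frac{C_1(1+\delta)^2}{\delta}\frac{\mathrm{KL}(\rho\|\pi)+\log\frac2\varepsilon}{n},
\]
the value $\lambda=\lambda^*$ being the one that balances the two relaxation parameters and fixes the prefactors. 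Convexity of $\|\cdot\|_{F,\Pi}^2$ (Jensen's inequality) gives $\|\hat{XM}_{\lambda^*}-XM^*\|_{F,\Pi}^2\le \int(R(M)-R(M^*))\,\hat{\rho}_{\lambda}(\mathrm{d}M)$, so it remains to control the right-hand side for a well-chosen $\rho$.

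Finally, for a fixed rank $r$ and a fixed $\bar M$ of rank at most $r$, I would take $\rho=\rho_{\bar M}$ to be a distribution concentrated in a small neighbourhood of $\bar M$ and supported (up to negligible mass) on matrices whose spectrum has at most $r$ non-negligible singular values, concretely a localized perturbation of $\bar M$ at scale $\tau^*$. Two estimates are needed: (i) the bias term $\int\|\Pi_C(XM)-XM^*\|_{F,\Pi}^2\,\rho(\mathrm{d}M)$ must stay within $(1+\delta)\|X\bar M-XM^*\|_{F,\Pi}^2$ up to a remainder the scale $\tau^*$ renders negligible, using that $\Pi_C$ is $1$-Lipschitz and $\|A\|_{F,\Pi}\le\|A\|_{F}$; and (ii) the divergence must satisfy $\mathrm{KL}(\rho_{\bar M}\|\pi)\lesssim (p+m+2)\sum_{j\le r}\log(1+s_j(\bar M)^2/\tau^{*2}) + (m+p)$, which after $\tau=\tau^*$ and the concavity bound $\sum_{j\le r}\log(1+s_j^2/\tau^2)\le r\log(1+\|\bar M\|_F^2/(r\tau^2))$ reproduces the displayed complexity $4r(m+p+2)\log(1+\cdots)+(m+p)$. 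Substituting (i)--(ii) and taking $\inf_{0\le r\le mp}\inf_{\bar M}$ finishes the proof.

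I expect the KL computation in (ii) to be the main obstacle. Because the prior is specified through $\det(\tau^2\mathbf{I}_m+MM^\intercal)$, perturbing $M$ around the rank-$r$ matrix $\bar M$ couples all $m$ singular directions: one must show that the $r$ ``signal'' directions generate the $r(m+p)\log(\cdots)$ term while the remaining $m-r$ ``noise'' directions contribute only the $O(m+p)$ term, and at the same time keep the bias (i) small. Tracking the exact constants (the factor $4$, the power $m+p+2$, and the precise argument of the logarithm) through this spectral perturbation, with $\tau^*$ calibrated to equalize bias and complexity, is the delicate part.
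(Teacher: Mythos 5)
Your proposal follows essentially the same route as the paper: a Bernstein-type exponential moment bound in which the variance of the loss increments is dominated by the excess risk (yielding $C_1$, $C_2$ and the constraint $\lambda<n/(2C_2)$), Donsker--Varadhan plus Chernoff to obtain the PAC-Bayes oracle inequality with prefactors $(1+\delta)$ and $C_1(1+\delta)^2/\delta$, Jensen's inequality to pass to the truncated posterior mean, and a low-rank proposal distribution around $\bar M$ to make the bound explicit. The one obstacle you flag---the KL and bias computations for the spectral perturbation of $\bar M$---is discharged in the paper simply by taking $\rho_{\bar M}$ to be the exact translate $\pi(\bar M-\cdot)$ of the prior, for which $\mathcal{K}(\rho_{\bar M},\pi)\le 2\,{\rm rank}(\bar M)(m+p+2)\log\bigl(1+\|\bar M\|_F/(\tau\sqrt{2\,{\rm rank}(\bar M)})\bigr)$ and $\int\|M\|_F^2\,\pi({\rm d}M)\le mp\tau^2$ are quoted from Dalalyan (2020); note in particular that the additive $(m+p)$ in the final bound arises from this second-moment term $\|X\|_F^2\,pm\,\tau^{*2}=C_1(m+p)/n$ in the bias (the cross term vanishing because $\pi$ is centered), not from the ``noise directions'' of the KL divergence as you suggest.
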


Note that the above formula does not actually require ${\rm rank} (\bar{M})\neq 0$, if we have ${\rm rank} (\bar{M})=0$ then $\bar{M}=0$ and we interpret $0\log(1+0/0)$ as $0$. As mentioned in the introduction, the proof of this theorem relies on the PAC-Bayes theory. It is provided in the appendix.
In particular, we can upper bound the infimum on $\bar{M}$ by taking $\bar{M} = M^*$, which leads to the following result.

\begin{corollary}
Under the same assumptions and the same $\tau,\lambda^*$ as in Theorem~\ref{thrm_main}, let $r^*={\rm rank}(M^*)$. Put
$$
R_{\delta,m,p,n,r^*,\varepsilon} : =  \frac{C_1 (1+\delta)^2 }{\delta} \frac{ \left\lbrace 4 r^* (m+p+2) \log \left( 1+\frac{\| X \|_F \| M^* \|_F}{ \sqrt{C_1}} \sqrt{ \frac{nmp}{r^*(m+p)}} \right) +(m+p) + 2 \log \frac{2}{\varepsilon} \right\rbrace}{ n },
$$
 then 
\begin{align*}
 \left\|\hat{XM}_{\lambda^*} -XM^* \right\|_{F,\Pi}^2
\leq 
R_{\delta,m,p,n,r^*,\varepsilon}
\end{align*}
and in particular, if the sampling distribution $\Pi$ is uniform,
\begin{align*}
 \dfrac{   \|  \hat{XM}_{\lambda}   - XM^*  \|^2_{F}   }{ \ell p}
\leq 
R_{\delta,m,p,n,r^*,\varepsilon}.
\end{align*}
\end{corollary}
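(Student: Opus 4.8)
The plan is to obtain this corollary as an immediate specialization of Theorem~\ref{thrm_main}: since the oracle inequality there holds with its right-hand side written as a double infimum over $r$ and over matrices $\bar{M}$ with ${\rm rank}(\bar{M})\le r$, any admissible choice of the pair $(r,\bar{M})$ produces a valid upper bound. First I would take $\bar{M}=M^*$ together with $r=r^*={\rm rank}(M^*)$, which is admissible because ${\rm rank}(M^*)=r^*\le r^*$.

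With this choice the approximation term vanishes, $(1+\delta)\|X\bar{M}-XM^*\|_{F,\Pi}^2 = (1+\delta)\|XM^*-XM^*\|_{F,\Pi}^2 = 0$, so only the complexity term remains. Substituting $r=r^*$ and $\bar{M}=M^*$ into that term reproduces exactly the quantity $R_{\delta,m,p,n,r^*,\varepsilon}$ as defined in the statement. Because the infimum over $(r,\bar{M})$ is no larger than its value at this single pair, the first displayed inequality follows on the same event of probability at least $1-\varepsilon$ furnished by the theorem.

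For the uniform case I would invoke the convention recorded in Section~\ref{section:notations}: when the sampling distribution on $\{1,\dots,\ell\}\times\{1,\dots,p\}$ is uniform, every $\ell\times p$ matrix $A$ satisfies $\|A\|_{F,\Pi}^2 = \|A\|_F^2/(\ell p)$. Applying this with $A=\hat{XM}_{\lambda^*}-XM^*$, which indeed lies in $\mathbb{R}^{\ell\times p}$ since $X\in\mathbb{R}^{\ell\times m}$ and $M^*\in\mathbb{R}^{m\times p}$, turns the left-hand side of the first inequality into $\|\hat{XM}_{\lambda^*}-XM^*\|_F^2/(\ell p)$, which is precisely the second display.

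There is no genuine obstacle here; the content is carried entirely by Theorem~\ref{thrm_main}, and what remains is a substitution followed by a rescaling. The only point deserving a line of care is the degenerate case $M^*=\mathbf{0}_{m\times p}$: then $r^*=0$, and one checks that the factor $4r^*(m+p+2)\log(\cdots)$ is of the form $0\cdot\log(1+0/0)$, which is read as $0$ according to the convention stated just after the theorem, so that $R_{\delta,m,p,n,r^*,\varepsilon}$ remains well defined.
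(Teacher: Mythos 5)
Your proposal is correct and matches the paper's own (very brief) justification exactly: the corollary is obtained by specializing the oracle inequality of Theorem~\ref{thrm_main} to $\bar{M}=M^*$ and $r=r^*$, so that the approximation term vanishes and only the complexity term $R_{\delta,m,p,n,r^*,\varepsilon}$ remains, with the uniform-sampling display following from the identity $\|A\|_{F,\Pi}^2=\|A\|_F^2/(\ell p)$. Your extra remark on the degenerate case $r^*=0$ is consistent with the convention stated after the theorem.
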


\begin{remark}
A minimax lower-bound on the estimation of $M^*$ in the trace regression problem (which includes RRR) was derived by \cite{koltchinskii2011nuclear}:    $ r^* \max(m,p) /n $. Thus, $\hat{M}_{\lambda^*}$ is minimax-optimal (at most up to a log term).
\end{remark}

While Theorem~\ref{thrm_main} states that the posterior mean leads to optimal estimation of $XM^*$, it is actually possible to prove that the quasi-posterior $\hat{\rho}_{\lambda}$ contracts around $M^*$ at the optimal rate.
\begin{theorem}
\label{thrm_contraction} 
Under the same assumptions for Theorem~\ref{thrm_main}, and the same definition for $\tau$ and $\lambda^*$, let $\varepsilon_n$ be any sequence in $(0,1)$ such that $\varepsilon_n\rightarrow 0$ when $n\rightarrow\infty$. Define
\begin{multline*}
\mathcal{E}_n 
 = 
 \Biggl\{ M \in \mathbb{R}^{m\times p}: \| \Pi_C(XM) - XM^*  \|^2_{F,\Pi} 
\leq
\inf_{1\leq r \leq mp} \inf_{
\begin{tiny}
\begin{array}{c}
\bar{M}\in\mathbb{R}^{p\times m}
\\
{\rm rank}(\bar{M}) \leq r
\end{array}
\end{tiny}
} \Biggl[ (1+\delta) \|X\bar{M}-X M^*  \|_{F,\Pi}^2 +
\\ 
 \frac{C_1 (1+\delta)^2}{\delta} \frac{ \left( 4 r (m+p+2) \log \left( 1+\frac{\| X \|_F \| \bar{ M } \|_F}{ \sqrt{C_1}} \sqrt{ \frac{nmp}{r(m+p)}} \right) +(m+p) + 2 \log \frac{2}{\varepsilon_n} \right)}{ n} \Biggr] \Biggr\}.
\end{multline*}
Then
$$ \mathbb{E} \Bigl[ \mathbb{P}_{M\sim \hat{\rho}_{\lambda}} (M\in\mathcal{E}_n) \Bigr] \geq 1-\varepsilon_n \xrightarrow[n\rightarrow \infty]{} 1.  $$
\end{theorem}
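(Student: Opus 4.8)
The plan is to prove the equivalent bound $\mathbb{E}\bigl[\mathbb{P}_{M\sim\hat{\rho}_\lambda}(M\notin\mathcal{E}_n)\bigr]\le\varepsilon_n$. Writing $D(M):=\|\Pi_C(XM)-XM^*\|_{F,\Pi}^2=R(M)-R(M^*)\ge 0$ (using \eqref{pyta}) and letting $B_n$ denote the right-hand side of Theorem~\ref{thrm_main} with $\varepsilon$ replaced by $\varepsilon_n$, the complement $\mathcal{E}_n^c$ is exactly $\{D(M)>B_n\}$, so by Fubini the quantity to control is the ``joint'' tail $\mathbb{E}[\hat{\rho}_\lambda(\{D>B_n\})]$. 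It is tempting to deduce this from Theorem~\ref{thrm_main}, but that theorem (through Jensen's inequality) only controls the posterior mean $\int D\,d\hat{\rho}_\lambda$, and Markov's inequality under $\hat{\rho}_\lambda$ then bounds $\hat{\rho}_\lambda(\{D>t\})$ only for thresholds $t$ a fixed multiple above the rate; at $t=B_n$ it is vacuous. Reaching the same rate inside the posterior tail forces me to reopen the PAC-Bayes argument and keep it in exponential form.

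Since $\hat{\rho}_\lambda(dM)\propto\exp(-\lambda[r(M)-r(M^*)])\pi(dM)$, I would write $\hat{\rho}_\lambda(\{D>B_n\})=N/\mathcal{Z}$, with $N:=\int_{\{D>B_n\}}e^{-\lambda[r(M)-r(M^*)]}\pi(dM)$ and $\mathcal{Z}:=\int e^{-\lambda[r(M)-r(M^*)]}\pi(dM)$. For the numerator I reuse the per-matrix sub-exponential (Bernstein) control of $r(M)-r(M^*)$ around its mean $D(M)$ that already underlies the proof of Theorem~\ref{thrm_main}: under Assumption~\ref{assum_noise}, for each fixed $M$ one has a bound of the form $\mathbb{E}\,e^{-\lambda[r(M)-r(M^*)]}\le\exp(-\lambda D(M)(1-\lambda C_1/n))$, and the definition of $\lambda^*$ guarantees $1-\lambda^*C_1/n\ge 1/(1+\delta)>0$. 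Integrating against $\pi$ over $\{D>B_n\}$ and using $D>B_n$ there yields $\mathbb{E}[N]\lesssim\exp(-\lambda B_n(1-\lambda C_1/n))$.

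For the denominator I would lower-bound $\mathcal{Z}$ by the Donsker--Varadhan variational inequality against the same low-rank comparison measure $\rho^*$ (a translate/restriction of the prior concentrated on a neighborhood of the minimizing $\bar{M}$) that produces the oracle term in Theorem~\ref{thrm_main}; combined with a ``good-direction'' Bernstein bound on $\int[r(M)-r(M^*)]\,d\rho^*$, this holds, on an event of probability at least $1-\varepsilon_n/2$, in the form $\mathcal{Z}\ge\exp(-\lambda B_n^{\circ})$, where $B_n^{\circ}$ is the oracle part of $B_n$, that is $B_n$ with its $\frac{C_1(1+\delta)^2}{\delta}\cdot\frac{2\log(2/\varepsilon_n)}{n}$ summand removed. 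On this event $N/\mathcal{Z}\le\exp\bigl(\lambda B_n^{\circ}-\lambda B_n(1-\lambda C_1/n)\bigr)$; the oracle parts cancel up to Bernstein $(1+\delta)$-corrections, which the $(1+\delta)^2/\delta$ prefactor in $B_n$ is precisely calibrated to absorb, leaving an exponent at most $-\log(2/\varepsilon_n)$ once $\lambda=\lambda^*$ is substituted. Adding the $\le\varepsilon_n/2$ contribution from the complementary event gives $\mathbb{E}[\hat{\rho}_\lambda(\{D>B_n\})]\le\varepsilon_n$, hence the claim.

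I expect the main obstacle to be the random normalizing constant $\mathcal{Z}$: because $N$ and $\mathcal{Z}$ are built from the same data they are correlated, so the ratio cannot be bounded by the ratio of expectations, and the lower bound on $\mathcal{Z}$ must instead be shown to hold with high probability through the second Bernstein estimate. The accompanying delicate point is the constant bookkeeping, namely checking that $\tau=\tau^*$ and $\lambda=\lambda^*$ simultaneously make the numerator contract and make the residual $\log(2/\varepsilon_n)$ slack in $B_n$ translate into exactly the factor $\varepsilon_n$; but this calibration is the same one already carried out in the proof of Theorem~\ref{thrm_main}, so no genuinely new estimate beyond the two-sided Bernstein control and the Donsker--Varadhan lower bound should be required.
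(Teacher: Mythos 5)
Your proposal is correct and is essentially the paper's own argument in different packaging: the paper's event $\mathcal{A}_n$ (obtained by applying Chernoff's trick to the prior-integrated Bernstein bound \eqref{lemma:exponential:1}) is exactly your numerator control of $\int_{\{D>B_n\}}e^{-\lambda[r-r(M^*)]}d\pi$, since $\log[d\hat\rho_\lambda/d\pi]$ makes the normalizing constant $-\log\mathcal{Z}$ appear, and the paper's event $\mathcal{B}_n$ (from \eqref{lemma:exponential:2} plus Donsker--Varadhan) is exactly your high-probability lower bound on $\mathcal{Z}$, with the two events combined by the same union bound and the same constant bookkeeping ($\beta/\alpha\le 1+\delta$, $1/\alpha\le C_1(1+\delta)^2/(n\delta)$) as in Theorem~\ref{thrm_main}. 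You also correctly identified the two genuine subtleties --- that Theorem~\ref{thrm_main} plus Markov is vacuous at the target threshold, and that the data-dependent normalizer forces a separate high-probability treatment --- so nothing is missing.
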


\section{Computational approximation implementation}
\label{sc_LMC}

\subsection{Langevin Monte Carlo algorithm}
In this section, we propose to compute an approximation of the (quasi) posterior by a suitable version of the LMC algorithm, a gradient-based sampling method. First, we write the logarithm of the density of the posterior
$$
\log  \hat{\rho}_{\lambda}(M)
= 
- \frac{\lambda}{n} \sum_{i=1}^n ( Y_i - (\Pi_C(XM))_{\mathcal{I}_i} )^2
- 
\frac{p+m+2}{2} \log \det (\tau^2 \mathbf{I}_{m} + MM^\intercal ) .
$$

Let us now differentiate this expression in $M$. Note that the term $( Y_i - (\Pi_C(XM))_{\mathcal{I}_i} )^2$ does actually not depend on $M$ locally if $(XM)_{\mathcal{I}_i}\notin [-C,C]$, in this case its differential with respect to $M$ is $\mathbf{0}_{p \times m}$. Otherwise, $( Y_i - (\Pi_C(XM))_{\mathcal{I}_i} )^2=( Y_i - (XM)_{\mathcal{I}_i} )^2$
In order to be able to differentiate the term $(XM)_{\mathcal{I}_i}$, let us introduce a notation for the entries of $\mathcal{I}_i$: $\mathcal{I}_i =(a_i,b_i)$.
Then
$ \nabla (XM)_{\mathcal{I}_i} =D $ where the matrix $D\in\mathbb{R}^{p\times m}$ satisfies $D_{x,y} = \mathbf{1}_{\{x=b_j\}} X_{a_j,y}$. Then
$$
\nabla  \log   \hat{\rho}_{\lambda}( M)
= 
\frac{2\lambda}{n} \sum_{i=1}^n (\nabla (XM)_{\mathcal{I}_i}) ( Y_i - (XM)_{\mathcal{I}_i} ) \mathbf{1}_{\{ |(XM)_{\mathcal{I}_i}| < C \}}
- 
(p+m+2) (\tau^2 \mathbf{I}_{m} + MM^\intercal )^{-1} M .
$$

In this work, we use a constant step-size unadjusted LMC algorithm, see \cite{durmus2019high} for more details. The algorithm is given by an initial matrix $M_0$ and the recursion
\begin{equation}
\label{langevinMC}
M_{k+1}  =  M_{k} - h\nabla \log    \hat{\rho}_{\lambda}(M_k) +\sqrt{2h}\,W_k\qquad k=0,1,\ldots
\end{equation}
where $h>0$ is the step-size and $ W_0, W_1,\ldots$ are independent random matrices with i.i.d. standard Gaussian entries. We provide a pseudo-code for LMC in Algorithm \ref{lmc_algorithm}.

\begin{algorithm}{}
	\caption{LMC for BRRR}
	\begin{algorithmic}[0]
		\State \textbf{Input}: The data $(Y_i,\mathcal{I}_i)$ for $1\leq i\leq n$ and the matrix $X$
		\State \textbf{Parameters}: Positive real numbers $ \lambda, \tau,h,T $ 
		\State \textbf{Output}: The matrix $\hat{M} $
		\State \textbf{Initialize}: $ M_0 ,  \hat{M} = \mathbf{0}_{m\times p}$ 
		\For{$k \gets 1$ to $T$} 
		\State Sample $ M_{k} $ from \eqref{langevinMC};
		\State  $ \hat{M} \gets  \hat{M} +  M_{k}/T $
		\EndFor
	\end{algorithmic}
	\label{lmc_algorithm}
\end{algorithm}

In the case of very large $ p $, the LMC algorithm in \eqref{langevinMC} still requires to calculate a $p\times p$ matrix inversion at each iteration, this might be expensive and can slow down the algorithm. Therefore, we could replace this matrix inversion by its accurately approximation through a convex optimization.  It is noted that the matrix
$\mathbf{B} := (\tau^2 \mathbf{I}_{m} + MM^\intercal )^{-1} M $ is the solution to the following convex optimization problem
$$
\min_{\mathbf{B} } \big\{\| \mathbf{I}_p- M^\top \mathbf{B}  \|_F^2 + \tau^2\|\mathbf{B} \|_F^2\big\}.
$$
The solution of this optimization problem can be conveniently obtained by using the package `\texttt{glmnet}' \cite{glmnet} (with the family option `\texttt{mgaussian}'). This avoids to perform matrix inversion or other costly calculation.  However,  we note here that the LMC algorithm is being used with approximate gradient evaluation, theoretical assessment of this approach can be found in \cite{dalalyan2019user}.

\begin{remark}
For small values of the step-size $ h $, the output of Algorithm \ref{lmc_algorithm}, $ \hat{M} $,   is very close to the integral \eqref{equat_estimator} of interest. However, for some $h$ that may not small enough, the Markov process can be transient and as a consequence the sum explodes \cite{roberts2002langevin}.  Several strategies are available to address this issue: one can take a smaller h and restart the algorithm or a Metropolis–Hastings correction can be included in the algorithm. The Metropolis–Hastings approach ensures the convergence to the desired distribution, however,  the algorithm is greatly slowed down because of an additional acception/rejection step at each iteration. Taking a smaller $h $ also slows down the algorithm but we keep some control on its time of execution.
\end{remark}

\subsection{A Metropolis-adjusted Langevin algorithm}

\begin{algorithm}{}
	\caption{MALA for BRRR}
	\begin{algorithmic}[0]
		\State \textbf{Input}: The data $(Y_i,\mathcal{I}_i)$ for $1\leq i\leq n$ and the matrix $X$
		\State \textbf{Parameters}: Positive real numbers $ \lambda, \tau,h,T $
		\State \textbf{Output}: The matrix $\hat{M} $
		\State \textbf{Initialize}: $M_0 ;  \hat{M} = \mathbf{0}_{m\times p}$ 
		\For{$k = 1$ to $T$} 
		\State Sample $\tilde{M}_{k} $ from \eqref{mala}.
		\State Set $M_k = \tilde{M}_{k}$ with probability $ A_{MALA} $, from \eqref{eq_ac_mala}, otherwise $M_k = M_{k-1}$ .
		\State  $\hat{M} \gets  \hat{M} +  M_{k}/T $ .
		\EndFor
	\end{algorithmic}
	\label{mala_algoritm}
\end{algorithm}

Here, we propose a Metropolis-Hasting correction to the Algorithm \ref{lmc_algorithm}. This approach guarantees the convergence to the (quasi) posterior and it also provides a useful way for choosing $ h $. More precisely, we consider the update rule in \eqref{langevinMC} as a proposal for a new state,
\begin{align}
\tilde{M}_{k+1} = M_{k} - h\nabla \log    \hat{\rho}_{\lambda}  (M_k) +\sqrt{2h}\,W_k,\qquad
k=0,1,\ldots,
\label{mala}
\end{align}
Note that the matrix $\tilde{M}_{k+1} $ is normally distributed with mean $ M_{k} - h\nabla \log    \hat{\rho}_{\lambda}  (M_k) $ and the covariance matrices equal to $ 2h $ times the identity matrices. This proposal is accepted or rejected according to the Metropolis-Hastings algorithm that the proposal is accepted with probability:
\begin{equation}
\label{eq_ac_mala}
A_{MALA} 
:=
\min \left\lbrace 1,  \frac{  \hat{\rho}_{\lambda}  (\tilde{M}_{k+1}) q(M_k | \tilde{M}_{k+1}) }{
	\hat{\rho}_{\lambda}  (M_k ) q(\tilde{M}_{k+1} | M_k ) } \right\rbrace,
\end{equation}
where 
$$
q(x' | x) \propto \exp \left(-\frac{1}{4h}\|x'-x + h\nabla \log    \hat{\rho}_{\lambda}  (x) \|^2_F \right)
$$
is the transition probability density from $x$ to $x'$. The details of the Metropolis-adjusted Langevin algorithm (denoted by MALA) are presented Algorithm \ref{mala_algoritm}. Compared to random-walk Metropolis–Hastings,  the advantage of MALA is that it usually proposes moves into regions of higher probability, which are then more likely to be accepted.

We note that the step-size $h$ is chosen such that the acceptance rate is approximate $0.5$ following \cite{roberts1998optimal}. See Section \ref{sc_simu} for some choices in special cases in our simulations.

\section{Numerical studies}
\label{sc_simu}
\subsection{Simulation setups}

First, we perform some numerical studies on simulated data to access the performance of our proposed algorithms. The experiments are carried by using the \texttt{R} statistical software \cite{r_software}. We compare our algorithms LMC, MALA to the `mRRR' method in \cite{luo2018leveraging}. The `mRRR' method is a frequentist approach where the rank is selected by using 5-fold cross validation. The mRRR method is available from the \texttt{R} package `\texttt{rrpack}' \cite{rrpack}. 

We consider the following model setting. Setting I is a low-dimensional setup with 
$
 \ell = 100,  p = 8, m = 12 
$ 
and the true rank $r = {\rm rank}(M^*) = 2 $. The design matrix $X$ is generated from $ \mathcal{N}(0, \Sigma) $ where the covariance matrix $\Sigma$ is with diagonal entries 1 and off-diagonal entries $\rho_X \geq 0 $, we consider $ \rho_X = 0 $ and $ \rho_X = 0.5 $. Following \cite{luo2018leveraging}, the true coefficient matrix is generated as 
$
 M^* = M_1M_2^\top
$
where 
$ 
M_1 \in \mathbb{R}^{p\times r}
$ 
is an orthogonal matrix from the QR decomposition of a random $ p\times r $ matrix filled with $ \mathcal{N}(0, 1) $ entries, and all entries in 
$ 
 M_2 \in \mathbb{R}^{m\times r} 
$ are i.i.d sampled from $ \mathcal{N}(0, 1) $. The full response matrix $ Z $, as in \cite{luo2018leveraging}, is generated as
$ 
Z = \mathbf{I}_{\ell\times p } + XM^* + E 
$
where $E $ is matrix with entries sampled from $ \mathcal{N}(0, 1) $.

Setting II is similar to Setting I, however, we consider a higher dimensional setup with $ \ell = 500, p = 40, m = 40 $. Setting III is an approximate low-rank set up: This series of simulation is similar to the Setting I, except that the true coefficient is no longer rank 2, but it can be well approximated by a rank 2 matrix:
$
M^* = 2\cdot M_1 M_2^\top + E,
$
where $E $ is matrix with entries sampled from $ \mathcal{N}(0, 0.1) $. Setting IV is also similar to Setting I, however we consider a heavy tail noise for the model \eqref{main_model} where the noise is now assumed to follow a $ t $-Student distribution with 3 degrees of freedom.

In each simulation run, once the full data matrices $ X,Z $ are simulated, we randomly sample $ \vartheta\%  $ entries in $ Z $ with, or without replacement: we let $\mathcal{I}_1,\dots,\mathcal{I}_n$ denote the $n$ observed entries, $Y_i = \mathcal{I}_i$, $\Omega = \{\mathcal{I}_1,\dots,\mathcal{I}_n\}$ the set of observed entries, $\bar\Omega = \{1,\dots,\ell \}\times  \{1,\dots,p\}\setminus \Omega$ the set of non-observed entries and $|\bar\Omega|$ its cardinality.
Comparison of the results of our method with and without replacement are provided in Appendix~\ref{ap_simu}, they are actually quite comparable. On the other hand, the `mRRR' is only implemented in the case without replacement, so we focus the rest of this section to the case without replacement.

These entries are set as missing values that $ \vartheta \in \{ 20, 50, 80 \} $. Under each setting, the entire data generation process is replicated 100 times. The evaluations are done by using the estimation error (Est) and the prediction error (Pred) as
$$
{\rm Est} := \frac{1}{p\ell}\| XM^* - X\hat{M} \|^2_F,
\quad
{\rm Pred} := \sum_{(i,j) \in \bar\Omega} (Z_{i,j} - (X\hat{M} )_{i,j} )^2/ |\bar\Omega|.
$$

The LMC, MALA are run with $5000$ iterations and we take the first $ 2000 $ steps as burn-in. We fixed $\tau^2 = 10 $ and $ \lambda = n/2 $ in all models. The choice of the step-size parameters is in the form $ h = (pm)^{-\alpha} $ where $ \alpha $ is chosen such that the acceptance rate of MALA algorithm is between 0.4 and 0.6. This interval is chosen to enclose 0.574, the optimum acceptance probability for the Langevin MC algorithm \cite{roberts1998optimal}. For example: $h = (pm)^{-1.15} $ for Setting I ($ \rho_X =0, \vartheta = 50\% $),  $h = (pm)^{-1} $ for Setting III ($ \rho_X =0.5, \vartheta = 80\% $).

\subsection{Simulation results}

The results from the simulations are given in Tables \ref{tb_model1}, \ref{tb_model2}, \ref{tb_model3} and \ref{tb_model4}. In general, these 3 methods yield comparable results in Setting I and II, where as the mRRR method is a bit better with less missing data. However, the case of approximate low-rank matrix in Setting III shows that our proposed methods, especially MALA, return better results compared to mRRR. In Setting IV, where our theoretical results do not cover, we observe that mRRR is more stable than our approach.

Looking at the performance of LMC versus MALA, we see that in all settings LMC yields comparable results with MALA while its running time is faster than MALA (as do not require MH correction). However, when the dimension increasing, LMC can return a better result compared to MALA, as in Table \ref{tb_model3}. This can be explained as that LMC converge faster to an approximate distribution whose the mean could be very closed to the target distribution's mean. See also Table \ref{tb_galaxy}. MALA is usually the best method with smallest errors. Moreover, as shown as an example in Table \ref{tb_change_lambda_credible}, MALA comes with reliable credible intervals while the credible intervals output from LMC have smaller coverage rate.

We further access the affect of changing $ \lambda $ in our proposed algorithms LMC and MALA, especially we emphasize on the empirical coverage of the 95\% credible interval. The LMC, MALA are run with 10000 iterations and we take the first $ 2000 $ steps as burn-in. The results are given in Table \ref{tb_change_lambda_credible} where we examine Setting III for the case of $ \rho_X = 0 $ and the missing rate in $ Y $ is $ \vartheta = 50\% $. The message from these results is that taking smaller $ \lambda $ could improve the empirical coverage rate, however the cost is in increasing the mean squared error. We also see that the value $ \lambda = n/2 $ is good choice that we have used in all our simulations. Some results for different values of $ \tau $ in the prior are given in Table \ref{tb_change_tau_prior}. We see that the different values of $ \tau $ will effect differently on LMC and MALA thus we stick a fixed value of $ \tau $ in our simulations. It is worth noted that the choices of $ \tau $ and $ \lambda $ are greatly effect to the values of the step-size $ h $.

\section{Real data application}

\subsection{Galaxy COMBO17 data: imcomplete data again full data}
This particular subset of the COMBO17 data set, see \cite{izenman2008modern}, consists of the n = 3,438 objects in the Chandra Deep Field South that are classified as “Galaxies” and for which there are no missing values. We also omitted five redundant variables and all error variables in the data set; the 29 remaining variables were then divided into a group of 23 variables as predictor matrix $ X $ and a group of 6 variables as response matrix $ Y $, details can be found in Table 7.2 in \cite{izenman2008modern}.

Here we emphasize that the response matrix $ Z $ is fully observed. The study in this section aims at finding how the missing rate in the response matrix affects to the estimation procedure. As this dataset contains data at different scale, we further rescale the data so that its columns has 0-mean and 1-variance. We first estimate, by using the OLS (ordinary least squares) method, the parameter matrix given fully observed data $ Z , X $ and denote this estimation by $ \hat{M}_{OLS} $. Then, we randomly remove  $ \vartheta = 20\%, 50\% $ and $ 80\% $ entries of $ Z $ and re-estimate the parameter matrix. This procedure is repeated 100 times and we report the average results. The LMC, MALA are run with 15000 iterations and we take the first $ 5000 $ steps as burn-in. 

To access the accuracy, we consider the estimation error and the prediction error as
\begin{align*}
{\rm Est} &:=  \| X\hat{M}_{OLS} -  X \hat{M} \|^2_F/ (\ell p ),
\quad
{\rm Pred} := \sum_{(i,j) \in \bar\Omega} (Z_{i,j} - (X\hat{M} )_{i,j} )^2/ |\bar\Omega|.
\end{align*}

The result outputs are given in Table \ref{tb_galaxy}. It is interesting to see that LMC here is the method returns with smallest errors for the cases of 20\% and 50\% missing rate. However, when the missing rate is 80\% MALA method is better than LMC. In all cases, mRRR method comes with higher errors compared with LMC and MALA.

\subsection{Yeast cell cycle data with imcomplete response}
The dataset is available in the `\texttt{secure}' \texttt{R}-package \cite{mishra2017sequential}. The analysis of the yeast cell cycle enables us to identify transcription factors (TFs) which regulate ribonucleic acid (RNA) levels within the eukaryotic cell cycle. The dataset contains two components: the chromatin immunoprecipitation (ChIP) data and eukaryotic cell cycle data. The binding information of a subset of 1790 genes and 113 TFs was included in the ChIP data \cite{lee2002transcriptional}, which results in the design matrix $ X $ of dimension $ 1790 \times 113 $. The cell cycle data were obtained by measuring the RNA levels every 7 minutes for 119 minutes, thus a total of 18 time points. The resulting response matrix $ Z $ is of dimension  $ 1790 \times 18 $. There are 626 missing entries in the response matrix $ Z $ which is around $ 1.94\% $. We further rescale the data so that its columns has 0-mean and 1-variance.

The data are randomly divided into a training set and a test set of size $ 80\% $ and $ 20\% $ with respect to the sample size. Model estimation is done by using the training data. Then the predictive performance is calculated on the test data by its mean squared prediction error $ \| Z_{test} - X_{test}\hat{B} \|_F^2 / ( p \ell_{test}) $ (the missing entries are discarded in calculation error), where $ ( Z_{test} , X_{test}) $ denotes the test set. We repeat the random training/test splitting process $100$ times and report the average mean squared prediction error for each method. We ran LMC and MALA with 30000 iterations and the first 5000 steps are ignored as burn-in period. The results are given in Figure \ref{tb_cellcycle}, we can see that LMC and MALA yield similar results, while mRRR method returns slightly smaller prediction error than LMC and MALA.

\section{Conclusion}
We proposed a quasi-Bayesian estimation method for reduced rank regression with incomplete responses. We proved the contraction of the posterior and the convergence of the posterior mean at the optimal rate. We implemented our algorithm via LMC and MALA, with promising results. Some questions remain open, such as missing entries in the covariate matrix $X$, and theoretical properties for the case of sampling without replacement might be the object of future works.

\section*{Acknowledgments}
TTM is supported by the Norwegian Research Council, grant number 309960 through the Centre for Geophysical Forecasting at NTNU.

\section*{Availability of data and materials}%% if any
The R codes and data used in the numerical experiments are available at:  \url{https://github.com/tienmt/BRRR_missing} .

%tb_cellcycle

\begin{figure}[!h]
	\centering
	\includegraphics[scale=.5]{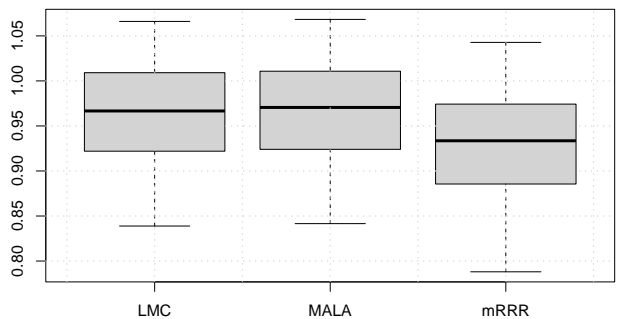}
	\caption{ Boxplot on prediction error for Cell cylcle real data for different methods}
	\label{tb_cellcycle}
\end{figure}

%tb_model1
\begin{table}[!h]
	\caption{Simulation results on simulated data in Setting I ($ \ell = 100,  p = 8, m = 12 $ and $r  = 3$) for different methods, with their standard error in parentheses.  (Est: average of estimation error; Pred: average of prediction error).}
	\centering
	\small
	\makebox[\textwidth]{
		\begin{tabular}{ | p{11mm}|ccc|| ccc | } 
			\toprule
			& \multicolumn{3}{ c|| }{$\rho_X = 0,  \vartheta = 20\%$} 
			& \multicolumn{3}{ c | }{$\rho_X = 0.5 , \vartheta = 20\%$} 
			\\
	Errors    	& LMC & MALA & mRRR & LMC & MALA & mRRR
			\\ 
			\midrule
Est 	& 0.165 (0.026) & 0.163 (0.026) & \textbf{0.072} (0.038)
		& 0.157 (0.023) & 0.156 (0.024) & \textbf{0.083} (0.047)
			\\ 
Pred 	& 2.234 (0.236) & 2.233 (0.234) & \textbf{2.099} (0.216)
		& 2.174 (0.206) & 2.171 (0.210) & \textbf{2.091} (0.208)
			\\  
			\midrule
			& \multicolumn{3}{ c || }{$\rho_X = 0, \vartheta = 50\%$} 
			& \multicolumn{3}{ c | }{$\rho_X = 0.5  , \vartheta = 50\%$} 
			\\ 
			\midrule
Est 	& 0.293 (0.052) & \textbf{0.288} (0.051) & 0.315 (0.278)
		& 0.295 (0.050) & \textbf{0.286} (0.047) & 0.366 (0.210)
			\\
Pred  	& 2.322 (0.160) & \textbf{2.313} (0.157) & 2.315 (0.338)
		& 2.349 (0.169) & \textbf{2.338} (0.161) & 2.402 (0.281)
			\\  
			\midrule
			& \multicolumn{3}{ c || }{$\rho_X = 0, \vartheta = 80\%$} 
			& \multicolumn{3}{ c | }{$\rho_X = 0.5  , \vartheta = 80\%$} 
			\\ 
			\midrule
Est 	& 14.41 (26.59) & \textbf{1.689} (0.507) & 1.707 (0.614)
		& 5.057 (4.356) & 1.568 (0.462) & \textbf{1.328} (0.625)
			\\
Pred  	& 19.79 (32.99) & 3.966 (0.632) & \textbf{3.718} (0.653)
		& 8.147 (5.418) & 3.832 (0.585) & \textbf{3.345} (0.661)
			\\ \bottomrule
		\end{tabular}
	}
	\label{tb_model1}
\end{table}

%tb_model2
\begin{table}[!h]
	\caption{Simulation results on simulated data in Setting II ($ \ell = 500,  p = 40, m = 40 $ and $r  = 3$) for different methods, with their standard error in parentheses.  (Est: average of estimation error; Pred: average of prediction error).}
	\centering
	\makebox[\textwidth]{
		\small
		\begin{tabular}{ | p{11mm}|ccc|| ccc | } 
			\toprule
			& \multicolumn{3}{ c|| }{$\rho_X = 0,  \vartheta = 20\%$} 
			& \multicolumn{3}{ c | }{$\rho_X = 0.5 , \vartheta = 20\%$} 
			\\
Errors   	& LMC & MALA & mRRR & LMC & MALA & mRRR
			\\ \midrule
Est		& 0.104 (0.004) & 0.104 (0.004) & \textbf{0.011} (0.001)
		& 0.107 (0.004) & 0.114 (0.004) & \textbf{0.066} (0.077)
			\\ 
Pred 	& 2.118 (0.038) & 2.119 (0.038) & \textbf{2.015} (0.036)
		& 2.116 (0.041) & 2.123 (0.042) & \textbf{2.062} (0.088)
			\\  
			\midrule
			& \multicolumn{3}{ c || }{$\rho_X = 0, \vartheta = 50\%$} 
			& \multicolumn{3}{ c | }{$\rho_X = 0.5  , \vartheta = 50\%$} 
			\\ \midrule
Est		& 0.179 (0.006) & 0.178 (0.006) & \textbf{0.017} (0.002)
		& 0.186 (0.007) & 0.198 (0.008) & \textbf{0.074} (0.068)
			\\
Pred 	& 2.195 (0.028) & 2.195 (0.027) & \textbf{2.019} (0.025)
		& 2.203 (0.028) & 2.217 (0.028) & \textbf{2.075} (0.071)
			\\  
			\midrule
			& \multicolumn{3}{ c || }{$\rho_X = 0, \vartheta = 80\%$} 
			& \multicolumn{3}{ c | }{$\rho_X = 0.5  , \vartheta = 80\%$} 
			\\ \midrule
Est 	& 0.663 (0.032) & \textbf{0.632} (0.031) & 1.902 (0.367)
		& 0.742 (0.037) & \textbf{0.723} (0.036) & 1.483 (0.660)
			\\
Pred  	& 2.728 (0.045) & \textbf{2.691} (0.042) & 3.899 (0.371)
		& 2.814 (0.054) & \textbf{2.790} (0.051) & 3.480 (0.658)
			\\ \bottomrule
		\end{tabular}
	}
	\label{tb_model2}
\end{table}

%tb_model3
\begin{table}[!h]
	\caption{Simulation results on simulated data in Setting III, approximate low-rank model, for different methods, with their standard error in parentheses.  (Est: average of estimation error; Pred: average of prediction error).}
	\centering
	\small
	\makebox[\textwidth]{
		\begin{tabular}{ | p{11mm} | ccc || ccc | } 
			\toprule
			& \multicolumn{3}{ c|| }{$\rho_X = 0,  \vartheta = 20\%$} 
			& \multicolumn{3}{ c | }{$\rho_X = 0.5 , \vartheta = 20\%$} 
			\\
Errors    	& LMC & MALA & mRRR 
			& LMC & MALA & mRRR
			\\ \midrule
Est 	& 0.157 (0.026) & \textbf{0.156} (0.025) & 0.313 (0.671)
		& 0.161 (0.026) & \textbf{0.160} (0.025) & 0.322 (0.545)
			\\ 
Pred  	& 2.184 (0.220) & \textbf{2.182} (0.218) & 2.277 (0.552)
		& \textbf{2.193} (0.222) & 2.198 (0.226) & 2.361 (0.638)
			\\ 
			\midrule
			& \multicolumn{3}{ c || }{$\rho_X = 0, \vartheta = 50\%$} 
			& \multicolumn{3}{ c | }{$\rho_X = 0.5  , \vartheta = 50\%$} 
			\\ \midrule
Est 	& 0.286 (0.050) & \textbf{0.281} (0.049) & 0.563 (0.660)
		& 0.293 (0.051) & \textbf{0.288} (0.050) & 0.630 (0.698)
			\\
Pred  	& 2.304 (0.182) & \textbf{2.298} (0.103) & 2.565 (0.733)
		& 2.337 (0.181) & \textbf{2.332} (0.179) & 2.631 (0.731)
			\\ 
			\midrule
			& \multicolumn{3}{ c || }{$\rho_X = 0, \vartheta = 80\%$} 
			& \multicolumn{3}{ c | }{$\rho_X = 0.5 , \vartheta = 80\%$} 
			\\ \midrule
Est 	& 7.766 (15.18) & \textbf{1.708} (0.683) & 2.985 (1.971)
		& 5.224 (4.204) & \textbf{1.645} (0.523) & 2.444 (1.500)
			\\
Pred 	& 11.51 (18.81) & \textbf{3.985} (0.907) & 5.097 (2.069)
		& 8.351 (5.232) & \textbf{3.918} (0.660) & 4.505 (1.509)
			\\ 
			\bottomrule
		\end{tabular}
	}
	\label{tb_model3}
\end{table}

%tb_model4
\begin{table}[!h]
	\caption{Simulation results on simulated data in Model IV, heavy tail noise ($ \ell = 100,  p = 8, m = 12 $ and $r  = 3$) for different methods, with their standard error in parentheses.  (Est: average of estimation error; Pred: average of prediction error).}
	\centering
	\small
	\makebox[\textwidth]{
		\begin{tabular}{ | p{11mm}|ccc| ccc | } 
			\toprule
			& \multicolumn{3}{ c| }{$\rho_X = 0,  \vartheta = 20\%$} 
			& \multicolumn{3}{ c | }{$\rho_X = 0.5 , \vartheta = 20\%$} 
			\\
Errors   & LMC & MALA & mRRR & LMC & MALA & mRRR
			\\ \midrule
Est 	& 0.470 (0.149) & 0.465 (0.148) & \textbf{0.416} (0.353)
		& 0.460 (0.159) & 0.455 (0.158) & \textbf{0.311} (0.190)
			\\ 
Pred  	& 4.439 (1.305) & 4.435 (1.305) & \textbf{4.355} (1.348)
		& 4.379 (1.273) & 4.374 (1.271) & \textbf{4.159} (1.274)
			\\  
			\midrule
			& \multicolumn{3}{ c | }{$\rho_X = 0, \vartheta = 50\%$} 
			& \multicolumn{3}{ c | }{$\rho_X = 0.5  , \vartheta = 50\%$} 
			\\ 
			\midrule
Est 	& 0.829 (0.230) & 0.812 (0.225) & \textbf{0.681} (0.341)
		& 0.862 (0.372) & 0.840 (0.365) & \textbf{0.517} (0.213)
			\\
Pred  	& 4.799 (1.340) & 4.777 (1.334) & \textbf{4.533} (1.306)
		& 4.900 (1.688) & 4.870 (1.685) & \textbf{4.455} (1.521)
			\\  
			\midrule
			& \multicolumn{3}{ c | }{$\rho_X = 0, \vartheta = 80\%$} 
			& \multicolumn{3}{ c | }{$\rho_X = 0.5  , \vartheta = 80\%$} 
			\\ \midrule
Est 	& 11.65 (12.26) & 4.706 (3.256) & \textbf{1.457} (0.593)
		& 7.905 (4.779) & 3.986 (1.870) & \textbf{1.012} (0.458)
			\\
Pred  	& 18.09 (15.31) & 9.420 (4.068) & \textbf{5.546} (1.772)
		& 13.32 (5.885) & 8.510 (2.374) & \textbf{4.986} (0.961)
			\\ \bottomrule
		\end{tabular}
	}
	\label{tb_model4}
\end{table}

%tb_change_lambda_credible
\begin{table}[!h]
	\caption{Results on credible intervals for changing $ \lambda $ in Setting III  $ \rho_X = 0 , \vartheta = 50\% $, with their standard error over 100 repetitions in parentheses. (ECovR is the empirical coverage rate, MSE is the mean squared error). }
	\centering
	\small
	\begin{tabular}{ |  p{1.5cm} p{1.5cm} |c  c  c  c|  } 
		\toprule 
 & & $ \lambda = n $ & $ \lambda = n/2 $ & $ \lambda = n/8 $ & $ \lambda = n/32 $
		\\ \midrule
\multirow{2}{*}{LMC} &
MSE &  0.279 (0.046) & 0.280 (0.046) & 0.288 (0.049) & 0.347 (0.064)
\\
	&ECovR & 0.776 (0.047) & 0.911 (0.032) & 0.999 (0.003) & 1.0 (0.0)
		\\ \midrule 
		&&&&&
		\\[-.7em]  \midrule
\multirow{2}{*}{MALA} 	&
MSE & 0.277 (0.045) & 0.276 (0.045) & 0.276 (0.046)& 0.323 (0.065)
\\
	&ECovR & 0.840 (0.041) & 0.950 (0.026) & 1.000 (0.001)& 1.0 (0.0)
		\\ \bottomrule
	\end{tabular}
	\label{tb_change_lambda_credible}
\end{table}

%tb_change_tau_prior
\begin{table}[!h]
	\caption{Results for changing $ \tau $ in Setting I with $ \rho_X = 0 , \vartheta = 50\% $, with their standard error over 100 repetitions in parentheses. (ECovR is the empirical coverage rate, MSE is the mean squared error.)}
	\centering
	\small
	\begin{tabular}{ | p{1.5cm} p{1.5cm} |c  c  c  c|  } 
		\toprule
	&	& $ \tau = 10 $ & $ \tau = 5 $ & $ \tau = 1 $ & $ \tau = 0.1 $
		\\ \midrule
\multirow{2}{*}{LMC} &
		MSE 
		&  0.287 (0.044) & 0.291 (0.045) & 0.393 (0.060) & 1.415 (1.899)
		\\
&		ECovR 
		& 0.909 (0.034) & 0.905 (0.033) & 0.858 (0.038) & 0.575 (0.072)
		\\ \midrule
		&&&&&
		\\[-.8em] \midrule
\multirow{2}{*}{MALA} 	&	
	MSE & 
		0.283 (0.044) & 0.278 (0.044) & 0.224 (0.037)& 2.508 (4.473)
		\\ 
	&	ECovR &  
		0.949 (0.023) & 0.949 (0.022) & 0.955 (0.024)& 0.459 (0.095)
		\\ \bottomrule
	\end{tabular}
	\label{tb_change_tau_prior}
\end{table}

%tb_galaxy
\begin{table}[!h]
	\caption{Results with real Galaxy data for different methods, with their standard error in parentheses.  (Est: average of estimation error; Pred: average of prediction error.}
	\centering
	\small
\begin{tabular}{ | p{15mm} | p{15mm}|c|c | } 
\toprule	
missing rate & method & Est & Pred 
\\	\midrule			
\multirow{3}{*}{$ \vartheta = 20\%$} 	
& LMC 	& \textbf{0.023} (0.004)  & \textbf{0.633} (0.050)
	\\
& MALA	& 0.030 (0.005)  & 0.639 (0.054) 
	\\
& mRRR  & 0.117 (0.028)  & 0.704 (0.056) 
	\\ \hline 
	& & &
	\\[-.7em]	\midrule
\multirow{3}{*}{$ \vartheta = 50\%$}		
& LMC 	& \textbf{0.051} (0.021)  & \textbf{0.670} (0.049) 
	\\
& MALA	& 0.057 (0.021)  & 0.673 (0.049) 
	\\
& mRRR 	& 0.131 (0.030)  & 0.724 (0.060) 
	\\ \hline 
	& & &
	\\[-.7em]	\midrule
\multirow{3}{*}{$ \vartheta = 80\%$}		
& LMC 	& 0.213 (0.106)  & 0.844 (0.134)
	\\
& MALA	& \textbf{0.185} (0.097)  & \textbf{0.803} (0.123)
	\\
& mRRR 	& 0.341 (0.006)	 & 0.914 (0.012)
	\\ 
	\bottomrule
\end{tabular}
\label{tb_galaxy}
\end{table}

\clearpage
\appendix
\section{Appendix: proofs}
\label{sc_appendix_proof}

As mentioned above, our strategy for the proofs is to use oracle type PAC-Bayes bounds, in the spirit of ~\cite{catoni2004statistical}. We start with a few preliminary lemmas, we then provide the proof of Theorem \ref{thrm_main} and the proof of Theorem \ref{thrm_contraction}.

\subsection{Preliminary results}

First, we state a version of Bernstein's inequality taken from \cite{MR2319879}, (2.21) in Proposition 2.9 page 24.
\begin{lemma}[Bernstein's inequality]
\label{lemmemassart} Let $U_{1}$, \ldots, $U_{n}$ be independent real
valued random variables. Let us assume that there are two constants
$v$ and $w$ such that
$
 \sum_{i=1}^{n} \mathbb{E}[U_{i}^{2}] \leq v 
$
and for all integers $k\geq 3$,
$
 \sum_{i=1}^{n} \mathbb{E}\left[(U_{i})^{k}\right] \leq v\frac{k!w^{k-2}}{2}. 
$
Then, for any $\zeta\in (0,1/w)$,
$$ \mathbb{E}
\exp\left[\zeta\sum_{i=1}^{n}\left[U_{i}-\mathbb{E}(U_{i})\right]
\right]
        \leq \exp\left(\frac{v\zeta^{2}}{2(1-w\zeta)} \right) .$$
\end{lemma}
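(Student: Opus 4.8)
The statement is the classical Chernoff-type control of the moment generating function that underlies Bernstein's inequality, and the plan is to reproduce the standard cumulant argument. Write $Y_i := U_i - \mathbb{E}(U_i)$, so that $\mathbb{E}(Y_i)=0$. The first step is to use the independence of the $U_i$ (hence of the $Y_i$) to factorise the joint exponential moment:
\[
\mathbb{E}\exp\!\Big[\zeta\sum_{i=1}^n Y_i\Big] = \prod_{i=1}^n \mathbb{E}\,e^{\zeta Y_i}.
\]
Taking logarithms reduces the claim to the bound $\sum_{i=1}^n \log \mathbb{E}\,e^{\zeta Y_i} \leq v\zeta^2/[2(1-w\zeta)]$.

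Second, I would control each factor with the elementary inequality $\log t \leq t-1$ and the centering $\mathbb{E}(Y_i)=0$, which gives
\[
\log \mathbb{E}\,e^{\zeta Y_i} \leq \mathbb{E}\,e^{\zeta Y_i} - 1 = \mathbb{E}\big[e^{\zeta Y_i} - 1 - \zeta Y_i\big] = \sum_{k\geq 2}\frac{\zeta^k}{k!}\,\mathbb{E}[Y_i^k],
\]
the last equality being the Taylor expansion of the exponential, whose $k=0$ and $k=1$ terms cancel by centering. The $k=2$ term is $\tfrac{\zeta^2}{2}\,\mathrm{Var}(U_i) \leq \tfrac{\zeta^2}{2}\,\mathbb{E}(U_i^2)$.

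Third, I would insert the moment hypotheses and sum a geometric series. Summing over $i$, interchanging the two summations, and using $\sum_i \mathbb{E}(U_i^2)\leq v$ for the quadratic term together with the cubic-and-higher moment control for $k\geq 3$, one obtains
\[
\sum_{i=1}^n \log \mathbb{E}\,e^{\zeta Y_i} \leq \frac{v\zeta^2}{2} + \sum_{k\geq 3}\frac{\zeta^k}{k!}\cdot\frac{v\,k!\,w^{k-2}}{2} = \frac{v\zeta^2}{2}\sum_{j\geq 0}(w\zeta)^{j} = \frac{v\zeta^2}{2(1-w\zeta)}.
\]
The geometric series converges precisely because $\zeta\in(0,1/w)$, i.e. $w\zeta<1$, which is exactly where the stated range of $\zeta$ and the factor $1/(1-w\zeta)$ come from.

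The main obstacle is the passage from the raw moments $\mathbb{E}[U_i^k]$ appearing in the hypothesis to the centred moments $\mathbb{E}[Y_i^k]$ that arise in the Taylor expansion: this is immediate for $k=2$ (the variance is dominated by the second raw moment), but requires care for $k\geq 3$, and is precisely the technical point handled in the positive-part formulation of Massart's Proposition~2.9 in \cite{MR2319879}. Since the lemma is quoted verbatim from that reference, I would either invoke it directly or, for a self-contained argument, replace the raw-moment condition by the positive-part condition $\sum_i \mathbb{E}[(U_i)_+^k]\leq \tfrac12 v\,k!\,w^{k-2}$ and bound $\mathbb{E}[Y_i^k]$ accordingly before running the geometric summation above.
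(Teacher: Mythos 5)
The paper does not prove this lemma at all: it is imported verbatim (as the text says) from Proposition~2.9 of Massart's lecture notes \cite{MR2319879}, so there is no internal proof to compare against. Your sketch is the standard cumulant/geometric-series argument behind that result, and your arithmetic (factorisation by independence, $\log t\le t-1$, resummation of $\sum_{k\ge 2}\zeta^k w^{k-2}$ into $\zeta^2/(1-w\zeta)$) is correct.

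The one genuine issue is the step you yourself flag: in your third display you substitute the hypothesis on the raw moments $\mathbb{E}[U_i^k]$ for the centred moments $\mathbb{E}[Y_i^k]$ that actually appear in your Taylor expansion, and as written that substitution is not justified for $k\ge 3$ (centred moments of odd order need not even have the same sign as raw ones, and bounding $\mathbb{E}[|Y_i|^k]$ by $\mathbb{E}[|U_i|^k]$ costs binomial factors). Your proposed repair is the right one, but it is worth knowing how Massart actually avoids the problem: he never centres inside the exponential-moment expansion. Writing $\log\mathbb{E}e^{\zeta(U_i-\mathbb{E}U_i)}\le \mathbb{E}[e^{\zeta U_i}-\zeta U_i-1]$ and using the pointwise inequality $e^{u}-u-1\le \tfrac{u^2}{2}+\sum_{k\ge 3}\tfrac{(u_+)^k}{k!}$ (valid for all real $u$, since $e^u-u-1\le u^2/2$ for $u\le 0$), one lands directly on $\tfrac{\zeta^2}{2}\mathbb{E}[U_i^2]+\sum_{k\ge3}\tfrac{\zeta^k}{k!}\mathbb{E}[(U_i)_+^k]$, which is exactly the positive-part hypothesis of Proposition~2.9 — the paper's transcription with $\mathbb{E}[(U_i)^k]$ in place of $\mathbb{E}[(U_i)_+^k]$ is a slight misquotation. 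None of this affects the paper's application, since in the proof of Lemma~3 the moment conditions are verified for $\mathbb{E}[|T_i|^k]$, which dominates both versions; also note that interchanging $\mathbb{E}$ and $\sum_k$ in your expansion should be justified by absolute summability of $\sum_k \zeta^k\mathbb{E}[|Y_i|^k]/k!$, which the sub-exponential moment hypothesis provides.
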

Another basic tool to derive PAC-Bayes bounds is Donsker and Varadhan's variational inequality, that we recap here. We refer to Lemma 1.1.3 in Catoni \cite{catonibook} for a proof (among others). From now, for any $\Theta\subset \mathbb{R}^{n_1}$ or $\Theta\subset \mathbb{R}^{n_1 \times n_2}$, we let $\mathcal{P}(\Theta)$ denote the set of all probability distributions on $\Theta$ equipped with the Borel $\sigma$-algebra. We remind that when $(\mu,\nu)\in \mathcal{P}(\Theta)^2$, the Kullback-Leibler divergence is defined by $\mathcal{K}(\nu,\mu) = \int \log \left( \frac{{\rm d}\nu}{{\rm d}\mu}(\theta) \right) \nu({\rm d}\theta) $ if $\nu$ admits a density $\frac{{\rm d}\nu}{{\rm d}\mu}$ with respect to $\mu$, and $\mathcal{K}(\nu,\mu)=+\infty$ otherwise.
\begin{lemma}[Donsker and Varadhan's variational formula]
\label{lemma:dv}
Let $\mu \in\mathcal{P}(\Theta)$. For any measurable, bounded function $h:\Theta\rightarrow\mathbb{R}$ we have:
\begin{equation*}
\log \int {\rm e}^{h(\theta)} \mu({\rm d}\theta) = \sup_{\rho\in\mathcal{P}(\Theta)}\left[\int h(\theta) \rho({\rm d}\theta) -KL (\rho\|\mu)\right].
\end{equation*}
Moreover, the supremum with respect to $\rho$ in the right-hand side is
reached for the Gibbs measure
$\mu_{h}$ defined by its density with respect to $\pi$
\begin{equation}
\label{equa:def:gibbs:measure}
\frac{{\rm d}\mu_{h}}{{\rm d}\mu}(\theta) =  \frac{{\rm e}^{h(\theta)}}
{ \int {\rm e}^{h(\vartheta)} \mu({\rm d}\vartheta) }.
\end{equation}
\end{lemma}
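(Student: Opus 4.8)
The plan is to prove both assertions simultaneously by establishing, for an arbitrary competitor $\rho\in\mathcal{P}(\Theta)$, an \emph{exact} identity comparing the bracketed quantity $\int h\,{\rm d}\rho - KL(\rho\|\mu)$ to the conjectured value $\log\int {\rm e}^{h}\,{\rm d}\mu$, in which the gap is precisely a Kullback--Leibler divergence of $\rho$ against the Gibbs measure $\mu_h$. Since $h$ is bounded, the normalizing constant $Z:=\int {\rm e}^{h(\vartheta)}\mu({\rm d}\vartheta)$ satisfies $0<Z<\infty$, so the measure $\mu_h$ defined by \eqref{equa:def:gibbs:measure} is a genuine probability measure whose density ${\rm d}\mu_h/{\rm d}\mu={\rm e}^{h}/Z$ is strictly positive $\mu$-almost everywhere. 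This positivity is the structural fact that makes the whole argument go through.

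First I would dispose of the degenerate case: if $KL(\rho\|\mu)=+\infty$, then, because $h$ is bounded and hence $\int h\,{\rm d}\rho$ is finite, the bracketed quantity equals $-\infty$, so such $\rho$ cannot attain the supremum and may be discarded. I therefore restrict to $\rho\ll\mu$ with $KL(\rho\|\mu)<\infty$. Since ${\rm d}\mu_h/{\rm d}\mu$ is strictly positive, $\mu$ and $\mu_h$ are mutually absolutely continuous, whence $\rho\ll\mu_h$ as well, and the chain rule ${\rm d}\rho/{\rm d}\mu=({\rm d}\rho/{\rm d}\mu_h)\,({\rm d}\mu_h/{\rm d}\mu)$ holds $\rho$-almost everywhere.

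The key step is then to take the logarithm of this chain rule and integrate against $\rho$, which yields
\begin{equation*}
KL(\rho\|\mu)=\int \log\frac{{\rm d}\rho}{{\rm d}\mu_h}\,{\rm d}\rho + \int \log\frac{{\rm d}\mu_h}{{\rm d}\mu}\,{\rm d}\rho = KL(\rho\|\mu_h) + \int h\,{\rm d}\rho - \log Z ,
\end{equation*}
where I substituted ${\rm d}\mu_h/{\rm d}\mu={\rm e}^{h}/Z$ in the last integral. Rearranging gives the exact identity $\int h\,{\rm d}\rho - KL(\rho\|\mu)=\log Z - KL(\rho\|\mu_h)$. To finish, I would invoke non-negativity of the Kullback--Leibler divergence (Gibbs' inequality, itself a consequence of Jensen applied to the convex map $t\mapsto t\log t$): $KL(\rho\|\mu_h)\ge 0$, with equality if and only if $\rho=\mu_h$. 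As $\log Z=\log\int {\rm e}^{h}\,{\rm d}\mu$ does not depend on $\rho$, the identity shows $\int h\,{\rm d}\rho - KL(\rho\|\mu)\le \log\int {\rm e}^{h}\,{\rm d}\mu$ for every $\rho$, saturated precisely at $\rho=\mu_h$; taking the supremum delivers the formula and at once identifies $\mu_h$ as the maximizer.

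There is no deep obstacle here—the result is essentially algebraic—so the care is entirely measure-theoretic bookkeeping. The two points to watch are the validity of the Radon--Nikodym chain rule (guaranteed by the strict positivity of the Gibbs density, which relies on the boundedness of $h$ so that $Z<\infty$ and ${\rm e}^{h}>0$) and the well-definedness and finiteness of each integral in the decomposition, so that the rearrangement incurs no $\infty-\infty$ ambiguity. Both are automatic because boundedness of $h$ keeps $\int h\,{\rm d}\rho$ finite, which is precisely why I isolate the $KL(\rho\|\mu)=+\infty$ case at the outset.
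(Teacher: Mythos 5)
Your proof is correct and complete: the exact identity $\int h\,{\rm d}\rho - KL(\rho\|\mu) = \log\int {\rm e}^{h}\,{\rm d}\mu - KL(\rho\|\mu_{h})$, combined with non-negativity of the Kullback--Leibler divergence (equality iff $\rho=\mu_{h}$) and your separate treatment of the degenerate case $KL(\rho\|\mu)=+\infty$, is the canonical argument, and the measure-theoretic points you flag (strict positivity of ${\rm e}^{h}/Z$ giving mutual absolute continuity, boundedness of $h$ ruling out $\infty-\infty$) are exactly the ones that need checking. Note that the paper itself supplies no proof of this lemma --- it defers to Lemma 1.1.3 of Catoni's book --- and the proof given there rests on the same Gibbs-measure/KL-gap decomposition, so your route coincides with the intended one.
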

These two lemmas are the only tools we need to prove Theorem \ref{thrm_main} and Theorem \ref{thrm_contraction}. Their proof is quite similar, with a few differences. In order not to do the same derivation twice, we will state the common parts of the proofs as a separate result: Lemma \ref{lemma:exponential}. Note that the proof of this lemma will already use Lemmas \ref{lemmemassart} and \ref{lemma:dv}.
\begin{lemma}
 \label{lemma:exponential}
 Under Assumptions~\ref{assum_bounded} and~\ref{assum_noise}, put
 \begin{equation}
\label{defalpha}
\alpha = \left(\lambda
-\frac{\lambda^{2} C_1 }{2n(1-\frac{ C_2 \lambda}{n})}\right)
\end{equation}
and
 \begin{equation}
\label{defbeta}
\beta = \left(\lambda
+\frac{\lambda^{2} C_1}{2n(1-\frac{ C_2 \lambda}{n})}\right) .
\end{equation}
Then for any $\varepsilon\in(0,1)$, and $\lambda \in (0,n/C_2)$,
\begin{multline}
  \mathbb{E} \int \exp  \Biggl\{  \alpha    \Bigl( R(M) - R(M^*) \Bigr)
 +\lambda\Bigl( -r(M) + r(M^*) \Bigr)    
    - \log \left[\frac{d\hat{\rho}_{\lambda}}{d \pi} (M)  \right]
         - \log\frac{2}{\varepsilon}
        \Biggr\}
         \hat{\rho}_{\lambda}(d M)
\leq \frac{\varepsilon}{2}
\label{lemma:exponential:1}
\end{multline}
and
\begin{align}
\mathbb{E} \sup_{\rho \in \mathcal{P}(\mathbb{R}^{m\times p}) } \exp\Biggl[  \beta
               \left(-\int Rd\rho + R(M^*) \right)
+ \lambda \left( \int r d\rho - r(M^*) \right) 
-
\mathcal{K}(\rho, \pi) - \log \frac{2}{\varepsilon}\Biggr] \leq
\frac{\varepsilon}{2}.
\label{lemma:exponential:2}
\end{align}
\end{lemma}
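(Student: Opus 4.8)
The plan is to establish, for each \emph{fixed} matrix $M$, a clean ``per-$M$'' exponential moment inequality of the form $\mathbb{E}\exp\bigl[\alpha(R(M)-R(M^*))-\lambda(r(M)-r(M^*))\bigr]\le 1$ for \eqref{lemma:exponential:1}, together with its twin involving $-\beta$ and the reversed empirical/expected roles for \eqref{lemma:exponential:2}, and then to lift these pointwise bounds to the stated integrated inequalities: by a change of measure (for the first) and by the Donsker--Varadhan formula of Lemma~\ref{lemma:dv} (for the second).

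First I would set up the centred summands. Fix $M$ and write $\Delta_i = (XM^*)_{\mathcal{I}_i}-(\Pi_C(XM))_{\mathcal{I}_i}$. Since Assumption~\ref{assum_bounded} gives $\Pi_C(XM^*)=XM^*$ and the model gives $Y_i=(XM^*)_{\mathcal{I}_i}+\mathcal{E}_i$, the per-observation loss difference reduces to $U_i := (Y_i-(\Pi_C(XM))_{\mathcal{I}_i})^2-(Y_i-(XM^*)_{\mathcal{I}_i})^2 = \Delta_i^2+2\Delta_i\mathcal{E}_i$, with $|\Delta_i|\le 2C$, so that $r(M)-r(M^*)=\tfrac1n\sum_i U_i$. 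Independence of $\mathcal{E}_i$ and $\mathcal{I}_i$ together with $\mathbb{E}(\mathcal{E}_i)=0$ makes the cross term vanish in expectation, and by the Pythagorean identity \eqref{pyta} one gets $\mathbb{E}(U_i)=\|\Pi_C(XM)-XM^*\|_{F,\Pi}^2=R(M)-R(M^*)$.

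The heart of the argument is to verify that $\{U_i\}$ (and $\{-U_i\}$) satisfy a Bernstein moment condition with variance proxy $v=\tfrac{nC_1}{2}\mathbb{E}(U_1)$ and scale $w=C_2$. The second moment is easy: expanding $U_1^2$, discarding the mean-zero cross term and using $|\Delta_1|\le 2C$ and $\mathbb{E}(\mathcal{E}_1^2)\le\sigma^2$ yields $\mathbb{E}(U_1^2)\le 4(C^2+\sigma^2)\mathbb{E}(\Delta_1^2)=\tfrac{C_1}{2}\mathbb{E}(U_1)$. The delicate part is the higher moments: factoring $|U_1|^k=|\Delta_1|^k\,|\Delta_1+2\mathcal{E}_1|^k$, bounding $|\Delta_1|^k\le(2C)^{k-2}\Delta_1^2$ and $(|\Delta_1|+2|\mathcal{E}_1|)^k\le 2^{k-1}\bigl((2C)^k+(2|\mathcal{E}_1|)^k\bigr)$, and then invoking the sub-exponential bound $\mathbb{E}(|\mathcal{E}_1|^k)\le\sigma^2 k!\xi^{k-2}$ of Assumption~\ref{assum_noise}, one checks that the resulting factorial growth is absorbed by the constants $C_1,C_2$; the two pieces are matched respectively against the $C^2$ and $\sigma^2$ parts of $C_1$ and the $C$ and $\xi$ parts of $C_2$. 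I expect this factorial-moment bookkeeping to be the main obstacle, since it is the only place where the precise numerical values of $C_1$ and $C_2$ are forced (and the slack there is what dictates the factor $\tfrac12$ in the definitions of $\alpha,\beta$).

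With the Bernstein condition in hand, I would apply Lemma~\ref{lemmemassart} with $\zeta=\lambda/n$, which is admissible precisely because $\lambda\in(0,n/C_2)$ ensures $\zeta<1/w$. Applied to $\{-U_i\}$ it controls $\mathbb{E}\exp[-\tfrac{\lambda}{n}\sum_i(U_i-\mathbb{E}U_i)]$, and combining this with the deterministic factor $\exp[(\alpha-\lambda)\mathbb{E}(U_1)]$ (recall $\alpha-\lambda=-\tfrac{\lambda^2 C_1}{2n(1-C_2\lambda/n)}$ and $\mathbb{E}(U_1)\ge0$) makes the two exponents cancel up to a nonpositive remainder, giving the per-$M$ bound $\le 1$; applying it instead to $\{U_i\}$ with the same $\zeta$ and the factor $\exp[(\lambda-\beta)\mathbb{E}(U_1)]$ gives the twin bound for $\beta$. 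To obtain \eqref{lemma:exponential:1}, I then integrate the per-$M$ bound against $\pi$ and use Fubini; the term $-\log\tfrac{d\hat{\rho}_{\lambda}}{d\pi}(M)$ inside the exponential is exactly the Radon--Nikodym factor that converts integration against $\hat{\rho}_{\lambda}$ back into integration against $\pi$, and multiplying by $\exp(-\log\tfrac2\varepsilon)=\tfrac\varepsilon2$ closes the bound. For \eqref{lemma:exponential:2}, since $R$ and $r$ are bounded on the support once $\Pi_C$ is applied (so that Lemma~\ref{lemma:dv} applies for fixed data), Donsker--Varadhan turns $\log\int e^{h}\,d\pi$, with $h(M)=-\beta(R(M)-R(M^*))+\lambda(r(M)-r(M^*))$, into $\sup_\rho[\int h\,d\rho-\mathcal{K}(\rho,\pi)]$; taking $\exp$, then $\mathbb{E}$, using the integrated per-$M$ bound, and multiplying by $\tfrac\varepsilon2$ yields the claim.
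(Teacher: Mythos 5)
Your proposal follows essentially the same route as the paper's proof: a per-$M$ Bernstein bound (Lemma~\ref{lemmemassart} with $\zeta=\lambda/n$, $w=C_2$, and a variance proxy proportional to $R(M)-R(M^*)$) applied to the loss differences and to their negatives, then lifted to \eqref{lemma:exponential:1} via the Radon--Nikodym change of measure from $\pi$ to $\hat{\rho}_{\lambda}$ and to \eqref{lemma:exponential:2} via Donsker--Varadhan. The only (immaterial) deviation is your sharper second-moment constant $C_1/2$ obtained by killing the cross term; if the higher-moment bookkeeping against $C_2$ gets tight with that smaller proxy, you can simply revert to the looser $v=nC_1\bigl(R(M)-R(M^*)\bigr)$ used in the paper, which still satisfies the variance condition and yields exactly the stated $\alpha$ and $\beta$.
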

\begin{proof}[\textbf{Proof of Lemma~\ref{lemma:exponential}}]
We start by proving the first inequality, that is \eqref{lemma:exponential:1}.
Fix any $M$ with $\|M\|_\infty\leq C $ and put
 $$ T_{i} =   \left( Y_i - (XM^*)_{\mathcal{I}_{i}} \right)^{2}
                    - \left( Y_i - (\Pi_C(XM))_{\mathcal{I}_{i}} \right)^{2} .$$
Note that $T_1,\dots,T_n$ are independent by construction. We have
\begin{align*}
\sum_{i=1}^{n} \mathbb{E}[T_{i}^{2}]  
&  = \sum_{i=1}^{n} \mathbb{E}
\left[
       \left( 2Y_{i} - (XM^*)_{\mathcal{I}_{i}} - (XM)_{\mathcal{I}_{i}} \right)^{2}
\left( (XM^*)_{\mathcal{I}_{i}} - (XM)_{\mathcal{I}_{i}} \right)^2
            \right]
\\
&  = \sum_{i=1}^{n} \mathbb{E} \left[
       \left( 2 \mathcal{E}_{i} + (XM^*)_{\mathcal{I}_i}  - (XM)_{\mathcal{I}_{i}}  \right)^{2}
\left( (XM^*)_{\mathcal{I}_{i}} - (XM)_{\mathcal{I}_{i}} \right)^2
            \right]
\\
&    \leq \sum_{i=1}^{n} \mathbb{E} \left[
        8 \left[ \mathcal{E}_{i}^{2} + C^2 \right]
\left[ (XM^*)_{\mathcal{I}_{i}} -  (XM)_{\mathcal{I}_{i}} \right]^2
            \right]
\\
&     = \sum_{i=1}^{n}  8 \left[ \mathcal{E}_{i}^{2} + C^2 \right]
     \mathbb{E} \left[ (XM^*)_{\mathcal{I}_{i}} - (XM)_{\mathcal{I}_{i}} \right]^2
\\
 &   \leq 8 n( \sigma^2 + C^2 )
\left[ R(M) - R(M^*)\right]
\\
& = n C_1\left[ R(M) - R(M^*)\right]  =:v(M, M^*).
\end{align*}
Next we have, for any integer $k\geq 3$, that
\begin{align*}
\sum_{i=1}^{n} \mathbb{E}\left[(T_{i})^{k}\right] \leq 
&
\sum_{i=1}^{n} \mathbb{E} \left[
       \left| 2Y_{i} - (XM^*)_{\mathcal{I}_{i}} - (XM)_{\mathcal{I}_{i}} \right|^{k}
\left| (XM^*)_{\mathcal{I}_{i}} - (XM)_{\mathcal{I}_{i}} \right|^k
            \right]
\\
\leq & \sum_{i=1}^{n} \mathbb{E} \left[
       2^{k-1} \left[  |2 \mathcal{E}_{i}|^{k} + (2 C)^k  \right]
 \left| (XM^*)_{\mathcal{I}_{i}} - (XM)_{\mathcal{I}_{i}} \right|^{k}
            \right]
\\
\leq  &  \sum_{i=1}^{n} \mathbb{E} \left[
       2^{2k-1}\left(  |\mathcal{E}_{i}|^{k} + C^k \right)
       (2C)^{k-2}
 \left| (XM^*)_{\mathcal{I}_{i}} - (XM)_{\mathcal{I}_{i}} \right|^{2}
            \right]
\\
\leq    &      2^{2k-1} \left[ \sigma^{2}k!\xi^{k-2}
             + C^k   \right]  (2C)^{k-2}
\sum_{i=1}^{n}\mathbb{E}  \left| (XM^*)_{\mathcal{I}_{i}} - (XM)_{\mathcal{I}_{i}} \right|^{2}
\\
\leq  &  \frac{ 2^{3k-3} \left[ \sigma^{2}k!\xi^{k-2} + C^k \right] C^{k-2} }{8(\sigma^2 + C^2) } v(M, M^*)
\\
\leq  &  \frac{ 2^{3k-6}  \left[ \sigma^{2}\xi^{k-2} + C^k \right] C^{k-2}}{(\sigma^2 + C^2) } k! v(M, M^*)
\\
\leq  &  2^{3k-5} \left[ \xi^{k-2} + C^{k-2} \right] C^{k-2} k! v(M, M^*)
\\
\leq  &  2^{3k-4} \max(\xi,C)^{k-2} C^{k-2} k! v(M, M^*)
\\
=  &  [2^3 \max(\xi,C) C ]^{k-2} 2^2 k! v(M, M^*)
\end{align*}
and use the fact that, for any $k\geq 3$, $2^2 \leq 2^{3(k-2)}/2 $ to obtain
\begin{align*}
\sum_{i=1}^{n} \mathbb{E}\left[(T_{i})^{k}\right] 
\leq   
 \frac{[2^6 \max(\xi,C) C ]^{k-2} k! v(M, M^*)}{2}
= 
v(M, M^*) \frac{k!C_2^{k-2}}{2}.
\end{align*}
Thus, we can apply Lemma~\ref{lemmemassart} with $U_i := T_i$, $v := v(M, M^*)$, $w:=C_2$ and $\zeta := \lambda/n$. We obtain, for any $\lambda\in(0,n/w)=(0,n/C_2)$,
\begin{align*}
\mathbb{E} \exp\left[\lambda
\Bigl( R(M)-R(M^*)-r(M)+r(M^*)\Bigr)\right]
 \leq
\exp\left[\frac{v\lambda^{2}}{2n^{2}(1-\frac{w\lambda}{n})}\right]
 =
\exp\left[\frac{ C_1
\left[ R(M) - R(M^*)\right] \lambda^{2}}{2n(1-\frac{C_2 \lambda}{n})}\right].
\end{align*}
Rearranging terms, and using the definition of $\alpha$ (that is~\eqref{defalpha}),
$$
\mathbb{E} \exp\left[\alpha
                      \Bigl( R(M) - R(M^*) \Bigr)
                    +\lambda\Bigl( -r(M) + r(M^*) \Bigr) \right] \leq 1.
$$
Multiplying both sides by $\varepsilon/2$ and then integrating w.r.t. the probability distribution $ \pi(.) $, we get
$$
\int \mathbb{E} \exp\Biggl[\alpha
                      \Bigl( R(M) - R(M^*) \Bigr)
                    +\lambda\Bigl( -r(M) + r(M^*) \Bigr)
         - \log\frac{2}{\varepsilon}\Biggr]  \pi (d M) \leq \frac{\varepsilon}{2}.
$$
Next, Fubini's theorem gives
$$
\mathbb{E} \int  \exp\Biggl[\alpha
                      \Bigl( R(M) - R(M^*) \Bigr)
                    +\lambda\Bigl( -r(M) + r(M^*) \Bigr)
         - \log\frac{2}{\varepsilon}\Biggr]  \pi (d M) \leq \frac{\varepsilon}{2}.
$$
and note that for any measurable function $h$,
\begin{align*}
\int  \exp[h(M)]  \pi (d M)
= 
\int  \frac{\exp[h(M)] } { \frac{{\rm d} \hat{\rho}_{\lambda}}{{\rm d}\pi}(M) } \hat{\rho}_{\lambda}(d M)
= 
\int  \exp\left[h(M)- \log \frac{{\rm d} \hat{\rho}_{\lambda}}{{\rm d}\pi}(M) \right] \hat{\rho}_{\lambda}(d M)
\end{align*}
to get \eqref{lemma:exponential:1}.

Let us now prove \eqref{lemma:exponential:2}. Here again, we start with an application of Lemma~\ref{lemmemassart}, but this time with $U_i := - T_i$ (we keep $v := v(M, M^*)$, $w:=C_2$ and $\zeta := \lambda/n$). We obtain, for any $\lambda\in(0,n/C_2)$,
$$
\mathbb{E}  \exp\left[\lambda
\Bigl( r(M)+r(M^*)-R(M)+R(M^*)\Bigr)\right]
\leq
\exp\left[\frac{ C_1
\left[ R(M) - R(M^*)\right] \lambda^{2}}{2n(1-\frac{C_2 \lambda}{n})}\right].
$$
Rearranging terms, using the definition of $\beta$ (that is~\eqref{defbeta}) and multiplying both sides by $\varepsilon/2$, we obtain
\begin{equation*}
\mathbb{E} \exp\Biggl[\beta
               \left(-R(M) + R(M^*) \right)
+ \lambda \left( r(M)- r(M^*) \right) - \log \frac{2}{\varepsilon}\Biggr] \leq
\frac{\varepsilon}{2}.
\end{equation*}
We integrate with respect to $\pi$ and use Fubini to get:
\begin{equation*}
\mathbb{E} \int \exp\Biggl[\beta
               \left(-R(M) + R(M^*) \right)
+ \lambda \left( r(M)- r(M^*) \right) - \log \frac{2}{\varepsilon}\Biggr] \pi({\rm d} M ) \leq
\frac{\varepsilon}{2}.
\end{equation*}
Here, we use a different argument from the proof of the first inequality: we use Lemma \ref{lemma:dv} on the integral, this gives directly \eqref{lemma:exponential:2}.
\end{proof}
Finally, in both proofs, we will use quite often distributions $\rho\in\mathcal{P}(\mathbb{R}^{m\times p})$ that will be defined as translations of the prior $\pi$. We introduce the following notation.
\begin{dfn}
\label{dfn:posterior:transla}
 For any  matrix $\bar{M} \in \mathbb{R}^{m\times p}$, we define $\rho_{\bar{M}} \in \mathcal{P}(\mathbb{R}^{m\times p})$
 by 
 $$
 \rho_{\bar{M}}(M) = \pi(\bar{M}-M).
 $$
\end{dfn}

\noindent The following technical lemmas, that can be found for example in \cite{dalalyan2020exponential}, will also be useful in the proofs.

\begin{lemma}[Lemma 1 in \cite{dalalyan2020exponential}]
\label{lemma:arnak:1}
 We have
 $ \int \|M\|_{F}^2 \pi({\rm d}M) \leq m p \tau^2. $
\end{lemma}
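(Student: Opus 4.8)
The plan is to exploit the representation of the spectral scaled Student prior \eqref{prior_scaled_Student} as a Gaussian scale mixture, which turns the computation of $\int \|M\|_F^2\,\pi({\rm d}M)$ into two standard moment evaluations. First I would introduce an auxiliary $m\times m$ positive-definite matrix $\Sigma$ carrying an inverse-Wishart law with scale matrix $\tau^2\mathbf{I}_m$ and $m+2$ degrees of freedom, i.e. with density proportional to $\det(\Sigma)^{-(2m+3)/2}\exp(-\tfrac{\tau^2}{2}{\rm Tr}(\Sigma^{-1}))$, and, conditionally on $\Sigma$, endow $M$ with the matrix-normal law whose density is proportional to $\det(\Sigma)^{-p/2}\exp(-\tfrac12{\rm Tr}(\Sigma^{-1}MM^\intercal))$ (row-covariance $\Sigma$, column-covariance $\mathbf{I}_p$). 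Integrating $\Sigma$ out, the exponents of $\det(\Sigma)$ add up and the trace terms combine into ${\rm Tr}((\tau^2\mathbf{I}_m+MM^\intercal)\Sigma^{-1})$; recognizing the inverse-Wishart normalizing constant then yields a marginal proportional to $\det(\tau^2\mathbf{I}_m+MM^\intercal)^{-(p+m+2)/2}$, which is exactly $\pi(M)$. This is the representation alluded to after \eqref{prior_scaled_Student} (the marginal of a Gaussian--inverse-Wishart prior).

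Given this representation, I would compute the Frobenius moment by conditioning. Since the conditional law of $M$ is matrix-normal with column-covariance $\mathbf{I}_p$, one has $\mathbb{E}[MM^\intercal\mid\Sigma]=p\,\Sigma$, so that $\mathbb{E}[\,{\rm Tr}(MM^\intercal)\mid\Sigma]=p\,{\rm Tr}(\Sigma)$. Taking expectations over $\Sigma$ and using the tower rule,
\[
\int \|M\|_F^2\,\pi({\rm d}M) = \mathbb{E}\big[{\rm Tr}(MM^\intercal)\big] = p\,{\rm Tr}\big(\mathbb{E}[\Sigma]\big).
\]
The last step is the mean of an inverse-Wishart matrix: with scale $\tau^2\mathbf{I}_m$ and $\nu=m+2$ degrees of freedom, $\mathbb{E}[\Sigma]=\tau^2\mathbf{I}_m/(\nu-m-1)=\tau^2\mathbf{I}_m$, whence ${\rm Tr}(\mathbb{E}[\Sigma])=m\tau^2$ and $\int\|M\|_F^2\,\pi({\rm d}M)=mp\tau^2$. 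In fact this gives equality, so the claimed inequality follows a fortiori.

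The main obstacle is bookkeeping rather than depth: one must pin down the inverse-Wishart degrees of freedom so that the marginal of $M$ matches the exponent $(p+m+2)/2$ in \eqref{prior_scaled_Student} exactly, and keep the matrix-normal and inverse-Wishart normalization conventions consistent throughout. The genuinely delicate point is finiteness: the inverse-Wishart mean exists only when $\nu-m-1>0$, and here $\nu-m-1=1$, so the second Frobenius moment is finite but only marginally --- a slightly lighter tail in the prior would make $\int\|M\|_F^2\,\pi({\rm d}M)$ diverge. A purely computational alternative would bypass the mixture and integrate directly over the singular values $s_j(M)$ using the joint density implied by \eqref{prior_scaled_Student}, but this route is considerably more cumbersome, so I would favor the scale-mixture argument above.
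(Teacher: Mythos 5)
Your proposal is correct, and in fact proves more than the statement asks: you obtain $\int\|M\|_F^2\,\pi({\rm d}M)=mp\tau^2$ with equality. Note that the paper does not contain its own proof of this lemma --- it is simply quoted from Lemma 1 of \cite{dalalyan2020exponential} --- so there is no internal argument to compare against; your derivation is a valid self-contained substitute. The route you take is exactly the Gaussian--inverse-Wishart marginal representation that the paper itself alludes to at the end of Section \ref{sc_priors}: the bookkeeping checks out ($\Sigma\sim\mathcal{IW}(\tau^2\mathbf{I}_m,\nu)$ with $\nu=m+2$ gives the joint exponent $-(p+2m+3)/2$ on $\det\Sigma$, and integrating $\Sigma$ out via the inverse-Wishart normalizing constant with $\nu'=p+m+2$ degrees of freedom reproduces $\det(\tau^2\mathbf{I}_m+MM^\intercal)^{-(p+m+2)/2}$), the conditional identity $\mathbb{E}[MM^\intercal\mid\Sigma]=p\,\Sigma$ is the standard matrix-normal second moment, and $\nu-m-1=1>0$ ensures $\mathbb{E}[\Sigma]=\tau^2\mathbf{I}_m$ is finite. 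Your remark that the moment is only marginally finite is a genuine and worthwhile observation about how sharply the exponent $(p+m+2)/2$ is tuned.
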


\begin{lemma}[Lemma 2 in \cite{dalalyan2020exponential}]
\label{lemma:arnak:2}
For any $\bar{M} \in \mathbb{R}^{m\times p}$, we have
$$
 \mathcal{K}( \rho_{\bar{M}} ,  \pi) 
\leq 
2 {\rm rank} (\bar{M}) (m+p+2) \log \left( 1+ \frac{\| \bar{ M } \|_F}{\tau \sqrt{2{\rm rank} (\bar{M})}} \right) $$
with the convention $0\log(1+0/0)=0$.
\end{lemma}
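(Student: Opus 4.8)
The plan is to compute the Kullback--Leibler divergence $\mathcal{K}(\rho_{\bar M},\pi)$ exactly and then bound it in a way that exploits both the rank of $\bar M$ and the heavy tails of the prior. Since $\rho_{\bar M}$ is by Definition~\ref{dfn:posterior:transla} the law of $\bar M-N$ with $N\sim\pi$, the change of variable $M\mapsto\bar M-N$ gives $\mathcal{K}(\rho_{\bar M},\pi)=\mathbb{E}_{N\sim\pi}[\log\pi(N)-\log\pi(\bar M-N)]$. Writing $\nu=(p+m+2)/2$, so that $\log\pi(M)=-\nu\log\det(\tau^2\mathbf{I}_m+MM^\intercal)+\mathrm{const}$, this becomes
$$\mathcal{K}(\rho_{\bar M},\pi)=\nu\,\mathbb{E}_{N\sim\pi}\!\left[\log\det\!\big(\tau^2\mathbf{I}_m+(\bar M-N)(\bar M-N)^\intercal\big)-\log\det\!\big(\tau^2\mathbf{I}_m+NN^\intercal\big)\right].$$
If $r:={\rm rank}(\bar M)=0$ then $\bar M=0$ and the integrand vanishes, which is the stated convention; so I assume $r\ge1$.

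Next I would set $W=\tau^2\mathbf{I}_m+NN^\intercal\succ0$ and $\Delta=\bar M\bar M^\intercal-\bar M N^\intercal-N\bar M^\intercal$, so that the ratio of determinants equals $\det(\mathbf{I}_m+W^{-1}\Delta)$. The crucial observation, which produces the rank-$r$ (rather than ambient-dimension-$m$) scaling, is that although $\Delta$ is a sum of three terms, its column space lies in the span of the columns of $\bar M$ together with $\{Nv_i\}_{i\le r}$, where the $v_i$ are the right singular vectors of $\bar M$; hence ${\rm rank}(\Delta)\le 2r$. Equivalently, from the singular value decomposition $\bar M=\sum_{i=1}^r s_i u_i v_i^\intercal$ one may write $\Delta=BCB^\intercal$ with $B=[u_1,\dots,u_r,Nv_1,\dots,Nv_r]\in\mathbb{R}^{m\times 2r}$. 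Thus $W^{-1}\Delta$ is similar to the symmetric matrix $W^{-1/2}\Delta W^{-1/2}$, so it has real eigenvalues, at most $2r$ of them nonzero; and $\mathbf{I}_m+W^{-1}\Delta$ is similar to $W^{-1/2}(\tau^2\mathbf{I}_m+(\bar M-N)(\bar M-N)^\intercal)W^{-1/2}\succ0$, so all the factors appearing below are strictly positive.

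I would then extract the exponent $2r$ via the arithmetic--geometric mean inequality applied to the eigenvalues $\mu_i$ of $W^{-1}\Delta$ (padded with zeros to exactly $2r$ of them):
$$\det(\mathbf{I}_m+W^{-1}\Delta)=\prod_{i=1}^{2r}(1+\mu_i)\le\left(1+\frac{{\rm Tr}(W^{-1}\Delta)}{2r}\right)^{2r}.$$
Taking logarithms, integrating over $N$, and using Jensen's inequality (concavity of $\log$) to move the expectation inside the logarithm bounds the divergence by $2r\nu\log\!\big(1+\mathbb{E}_N[{\rm Tr}(W^{-1}\Delta)]/(2r)\big)$. It remains to evaluate the expected trace: the cross term ${\rm Tr}(W^{-1}N\bar M^\intercal)$ is odd in $N$ and hence vanishes under the symmetric prior $\pi$, so $\mathbb{E}_N[{\rm Tr}(W^{-1}\Delta)]={\rm Tr}\big(\mathbb{E}_N[W^{-1}]\,\bar M\bar M^\intercal\big)\le\tau^{-2}\|\bar M\|_F^2$, where the inequality uses $W\succeq\tau^2\mathbf{I}_m$, hence $\mathbb{E}_N[W^{-1}]\preceq\tau^{-2}\mathbf{I}_m$. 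This gives $\mathcal{K}(\rho_{\bar M},\pi)\le r(m+p+2)\log(1+\|\bar M\|_F^2/(2r\tau^2))$, and the elementary inequality $1+u^2\le(1+u)^2$ with $u=\|\bar M\|_F/(\tau\sqrt{2r})$ converts this into the stated bound $2\,{\rm rank}(\bar M)(m+p+2)\log(1+\|\bar M\|_F/(\tau\sqrt{2r}))$.

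The main obstacle is the rank bound of the second step: obtaining the factor $r$ instead of $m$ requires recognizing that $\Delta$ has rank at most $2r$, which is not apparent from its three-term form and relies on the shared column spaces of $\bar M\bar M^\intercal$ and $\bar M N^\intercal$. A secondary but delicate point is the order of the two averaging inequalities: one must apply the arithmetic--geometric mean bound pointwise first---this is what produces the logarithm and reflects the heavy tails of the spectral Student prior---and invoke Jensen together with the first two moment computations only afterwards; a cruder estimate, or reversing the order, would yield a bound linear in $\|\bar M\|_F^2$ rather than the desired logarithmic, rank-adaptive rate.
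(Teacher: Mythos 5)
Your proof is correct and follows essentially the same route as the proof of Lemma 2 in \cite{dalalyan2020exponential}, which is the source this paper cites for the lemma without reproving it: the translation identity for the Kullback--Leibler divergence, the observation that the perturbation $\Delta$ has rank at most $2r$, the arithmetic--geometric mean bound on the at most $2r$ nonzero eigenvalues, Jensen's inequality combined with the symmetry of $\pi$ to eliminate the cross terms and bound $\mathbb{E}[{\rm Tr}(W^{-1}\bar M\bar M^\intercal)]$ by $\tau^{-2}\|\bar M\|_F^2$, and the final step $1+u^2\le(1+u)^2$ all appear there in the same order. There is nothing to correct.
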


\subsection{Proof of Theorem \ref{thrm_main}}

\begin{proof}[\textbf{Proof of Theorem \ref{thrm_main}}]
We apply Lemma~\ref{lemma:exponential}, and will start to work on its first inequality~\eqref{lemma:exponential:1}. An application of Jensen's inequality yields
$$
\mathbb{E} \exp\Biggl[\alpha
                      \left( \int R d \hat{\rho}_{\lambda} - R(M^*) \right)
                    +\lambda\left( -\int r d \hat{\rho}_{\lambda} + r(M^*) \right)
                 - \mathcal{K}(\hat{\rho}_{\lambda}, \pi)
         - \log\frac{2}{\varepsilon}\Biggr] \leq \frac{\varepsilon}{2}.
$$
We now use the standard Chernoff's trick to transform an exponential moment inequality into a deviation inequality, that is: $\exp(x) \geq \mathbf{1}_{\mathbb{R}_{+}}(x)$, we obtain
\begin{align}
\mathbb{P}\Biggl\{ \Biggr[ \alpha
                      \left(\int R d\hat{\rho}_{\lambda} - R(M^*)\right)
                    +\lambda\left(-\int r d\hat{\rho}_{\lambda} + r(M^*) \right)
                 - \mathcal{K}(\hat{\rho}_{\lambda}, \pi)
         - \log\frac{2}{\varepsilon}\Biggr] \geq 0
\Biggr\} 
\leq
 \frac{\varepsilon}{2}
\label{equa:step1}
\end{align}
Using~\eqref{pyta} we have
\begin{align*}
 \int R d\hat{\rho}_{\lambda} - R(M^*) & =
  \int \left\|\Pi_C(XM) -XM^* \right\|_{F,\Pi}^2\hat{\rho}_{\lambda}({\rm d}M)
  \\
 &  \geq   \left\|\int \Pi_C(XM)\hat{\rho}_{\lambda}({\rm d}M) -XM^* \right\|_{F,\Pi}^2
  \\
  & \geq   \left\|\hat{XM}_{\lambda} -XM^* \right\|_{F,\Pi}^2
\end{align*}
where we used Jensen's inequality from the first to the second line, and the definition of $\hat{XM}_{\lambda}$ from the second to the third. Plugging this into our probability bound~\eqref{equa:step1}, and dividing both sides by $\alpha$, we obtain
\begin{equation*}
\mathbb{P}\Biggl\{ \left\|\hat{XM}_{\lambda} -XM^* \right\|_{F,\Pi}^2
\leq 
\frac{ \int r d\hat{\rho}_{\lambda} - r(M^*) +
\frac{1}{\lambda}\left[\mathcal{K}(\hat{\rho}_{\lambda}, \pi) 
+ \log\frac{2}{\varepsilon}\right] } {\frac{\alpha}{\lambda} }
\Biggr\}
 \geq 
 1-\frac{\varepsilon}{2}
\end{equation*}
under the additional condition that $\lambda$ is such that $\alpha>0$, that we will assume from now (note that this is satisfied by $\lambda^*$). Using Lemma~\ref{lemma:dv} we can rewrite this as
%e15 ###
\begin{equation}\label{interm3bis}
\mathbb{P}\Biggl\{ \left\|\hat{XM}_{\lambda} -XM^* \right\|_{F,\Pi}^2
\leq \inf_{\rho \in \mathcal{P}(\mathbb{R}^{m\times p}) } \frac{ \int r
d\rho - r(M^*) +
\frac{1}{\lambda}\left[\mathcal{K}(\rho, \pi)
+ \log\frac{2}{\varepsilon}\right] } {\frac{\alpha}{\lambda} } \Biggr\} \geq 1-\frac{\varepsilon}{2}.
\end{equation}

We stop these derivations for one moment and instead consider now the consequences of the second inequality in Lemma~\ref{lemma:exponential}, that is~\eqref{lemma:exponential:2}. Chernoff's trick and rearranging terms a little, we get
\begin{align*}
 \mathbb{P}\Biggl\{ \forall \rho \in\mathcal{P}(\mathbb{R}^{m\times p}), \int rd\rho - r(M^*)
 \leq 
 \frac{\beta}{\lambda} \left[\int
Rd\rho - R(M^*) \right] + \frac{1}{\lambda}\left[
\mathcal{K}(\rho, \pi) + \log \frac{2}{\varepsilon} \right]
\Biggr\}\geq 1 - \frac{\varepsilon}{2}.
\end{align*}
which we can rewrite as
\begin{multline}
 \mathbb{P}\Biggl\{ \forall \rho \in\mathcal{P}(\mathbb{R}^{m\times p}), \int rd\rho - r(M^*)
\\
 \leq 
 \frac{\beta}{\lambda} \int \|\Pi_C(XM)-X M^* \|_{F,\Pi}^2 \rho({\rm d}M) + \frac{1}{\lambda}\left[
\mathcal{K}(\rho, \pi) + \log \frac{2}{\varepsilon} \right]
\Biggr\}\geq 1 - \frac{\varepsilon}{2}.
\label{interm4}
\end{multline}
\\
Combining (\ref{interm4}) and (\ref{interm3bis}) with a union bound
argument gives the bound
\begin{multline*}
\mathbb{P}\Biggl\{  \left\|\hat{XM}_{\lambda} -XM^* \right\|_{F,\Pi}^2
\leq 
\inf_{\rho \in \mathcal{P}(\mathbb{R}^{m\times p}) } 
\frac{ \beta  \int \|\Pi_C(XM)-X M^* \|_{F,\Pi}^2 
	\rho({\rm d}M) + 2 \left[
\mathcal{K}(\rho, \pi) + \log \frac{2}{\varepsilon} \right] } {
\alpha  } \Biggr\} 
\geq 
1-\varepsilon.
\end{multline*}
It is possible to simplify this bound a little by noting that, for any $(i,j)$, $(X M^*)_{i,j} \in [-C,C]$ implies
that $| (\Pi_C(X M))_{i,j} - (X M^*)_{i,j} | \leq | (X M)_{i,j} - (X M^*)_{i,j} |$ and thus
\begin{align*}
\mathbb{P}\Biggl\{  \left\|\hat{XM}_{\lambda} -XM^* \right\|_{F,\Pi}^2
\leq 
\inf_{\rho \in \mathcal{P}(\mathbb{R}^{m\times p}) } \frac{ \beta  \int \|XM-X M^* \|_{F,\Pi}^2 \rho({\rm d}M) + 2 \left[
\mathcal{K}(\rho, \pi) + \log \frac{2}{\varepsilon} \right] } {
\alpha  } \Biggr\} 
\geq 
1-\varepsilon.
\end{align*}
The end of the proof consists in making the right-hand side in the inequality more explicit. In order to do so, we restrict the infimum bound above to the distributions given by Definition~\ref{dfn:posterior:transla}.
\begin{multline}
\mathbb{P}\Biggl\{  \left\|\hat{XM}_{\lambda} -XM^* \right\|_{F,\Pi}^2
\leq 
\inf_{\bar{M}\in\mathbb{R}^{p\times m} } \frac{ \beta  \int \|XM-X M^* \|_{F,\Pi}^2 \rho_{\bar{M}} ({\rm d}M) + 2 \left[
\mathcal{K}(\rho_{\bar{M}}, \pi) + \log \frac{2}{\varepsilon} \right] } {
\alpha  } \Biggr\} 
\geq 
1-\varepsilon.
\label{PAC_bound}
\end{multline}
We see immediately that Dalalyan's lemma will be extremely useful for that. First, Lemma~\ref{lemma:arnak:2} provides an upper bound on $\mathcal{K}(\rho_{\bar{M}}, \pi)$. Moreover,
\begin{align*}
 & \int  \|XM-X M^* \|_{F,\Pi}^2 \rho_{\bar{M}} ({\rm d}M)
  \\
  & \leq   \int \|X\bar{M}-X M^* - XM \|_{F,\Pi}^2 \pi({\rm d}M)
  \\
  & =   \|X\bar{M}-X M^*  \|_{F,\Pi}^2 -2 \int \sum_{i,j} \Pi_{i,j} (X\bar{M}-X M^*)_{j,i} (XM)_{i,j} \pi({\rm d}M)
 + 
 \int \|XM \|_{F,\Pi}^2 \pi({\rm d}M).
\end{align*}
The second term in the right-hand side is null because $\pi$ is centered, and thus
\begin{align*}
  \int \|XM-X M^* \|_{F,\Pi}^2 \rho_{\bar{M}} ({\rm d}M)
  & \leq   \|X\bar{M}-X M^*  \|_{F,\Pi}^2 + \int \|XM \|_{F,\Pi}^2 \pi({\rm d}M)
  \\
  & \leq   \|X\bar{M}-X M^*  \|_{F,\Pi}^2 +  \int \|X M \|_{F}^2 \pi({\rm d}M)
  \\
  & \leq \|X\bar{M}-X M^*  \|_{F,\Pi}^2 +  \|X \|_{F}^2 \int \|M \|_{F}^2 \pi({\rm d}M)
  \\
  & \leq \|X\bar{M}-X M^*  \|_{F,\Pi}^2 +  \|X \|_{F}^2 pm \tau^2
\end{align*}
where we used elementary properties of the Frobenius norm, and Lemma~\ref{lemma:arnak:1} in the last line. We can now plug this (and Lemma~\ref{lemma:arnak:2}) back into~\eqref{PAC_bound} to get:
\begin{multline*}
\mathbb{P}\Biggl\{  \left\|\hat{XM}_{\lambda} -XM^* \right\|_{F,\Pi}^2
\leq 
\inf_{\bar{M}\in\mathbb{R}^{p\times m} } \Biggl[ \frac{\beta}{\alpha} \|X\bar{M}-X M^*  \|_{F,\Pi}^2 + \frac{\beta}{\alpha} \|X \|_{F}^2 pm \tau^2
\\
+ 
 \frac{1}{\alpha} \left( 4 {\rm rank} (\bar{M}) (m+p+2) \log \left( 1+ \frac{\| \bar{ M } \|_F}{\tau \sqrt{2{\rm rank} (\bar{M})}} \right) + 2 \log \frac{2}{\varepsilon} \right) \Biggr] \Biggr\} 
\geq 
1-\varepsilon.
\end{multline*}
The result is essentially proven, we just explicit the constants.
First, if $\lambda \leq n/(2C_2)$, then $2n( 1 - C_2 \lambda/n ) \geq n$ and thus
$$
\frac{\beta}{\alpha}
= 
\frac{1+ \frac{\lambda C_1 }{2n(1-\frac{ C_2 \lambda}{n})}}{1- \frac{\lambda C_1 }{2n(1-\frac{ C_2 \lambda}{n})}}
 \leq
  \frac{1+ \frac{\lambda C_1 }{n} }{1- \frac{\lambda C_1 }{n} }.
$$
Then, $ \lambda \leq \frac{n \delta}{C_1(1+\delta)} $ leads to
$$
\frac{\beta}{\alpha} \leq (1+\delta).
$$
Note that $\lambda^* = n \min(1/(2C_2), \delta/[C_1(1+\delta)] )$ satisfies these two conditions, so from now $\lambda = \lambda^*$. We also use the following:
\begin{align*}
\frac{1}{\alpha} 
 = 
\frac{1}{\lambda^*\left(1- \frac{\lambda^* C_1 }{2n( 1- C_2 \lambda^* / n )}\right)}
  \leq 
\frac{\beta}{\lambda^* \alpha} 
 \leq 
\frac{(1+\delta)}{n \min(1/(2C_2), \delta/[C_1(1+\delta)] )} 
 \leq 
\frac{C_1 (1+\delta)^2 }{n \delta}.
\end{align*}
So far the bound is: 
\begin{multline*}
\mathbb{P}\Biggl\{  \left\|\hat{XM}_{\lambda^*} -XM^* \right\|_{F,\Pi}^2
\leq 
\inf_{\bar{M}\in\mathbb{R}^{p\times m} } \Biggl[ (1+\delta) \|X\bar{M}-X M^*  \|_{F,\Pi}^2 + (1+\delta) \|X \|_{F}^2 pm \tau^2
\\
+ 
 \frac{C_1 (1+\delta)^2 \left( 4 {\rm rank} (\bar{M}) (m+p+2) \log \left( 1+ \frac{\| \bar{ M } \|_F}{\tau \sqrt{2{\rm rank} (\bar{M})}} \right) + 2 \log \frac{2}{\varepsilon} \right)}{ n\delta } \Biggr] \Biggr\} 
\geq 
1-\varepsilon.
\end{multline*}
In particular, with probability at least $ 1-\varepsilon $, the choice $\tau^2=C_1 (m+p)/(nmp \|X \|_{F}^2)$ gives
\begin{multline*}
 \left\|\hat{XM}_{\lambda^*} -XM^* \right\|_{F,\Pi}^2
\leq 
\inf_{\bar{M}\in\mathbb{R}^{p\times m} } \Biggl[ (1+\delta) \|X\bar{M}-X M^*  \|_{F,\Pi}^2 + \frac{C_1 (1+\delta) (m+p)}{n}
\\
+ 
 \frac{C_1 (1+\delta)^2 \left( 4 {\rm rank} (\bar{M}) (m+p+2) \log \left( 1+\frac{\| X \|_F \| \bar{ M } \|_F}{ \sqrt{C_1}} \sqrt{ \frac{nmp}{(m+p) {\rm rank}(\bar{M})}} \right) + 2 \log \frac{2}{\varepsilon} \right)}{ n\delta } \Biggr].
\end{multline*}
\end{proof}

\subsection{Proof of theorem \ref{thrm_contraction}}

\begin{proof}[\textbf{Proof of Theorem \ref{thrm_contraction}}]
We also start with an application of Lemma~\ref{lemma:exponential}, and focus on~\eqref{lemma:exponential:1}, applied to $\varepsilon:=\varepsilon_n$, that is:
\begin{align*}
 \mathbb{E} \int \exp  \Biggl\{   \alpha    \Bigl( R(M) - R(M^*) \Bigr)
                    +\lambda\Bigl( -r(M) + r(M^*) \Bigr)               - \log \left[\frac{d\hat{\rho}_{\lambda}}{d \pi} (M)  \right]
      - \log\frac{2}{\varepsilon_n}
        \Biggr\}
         \hat{\rho}_{\lambda}(d M)
\leq \frac{\varepsilon_n}{2}.
\end{align*}
Using Chernoff's trick, this gives:
$$
\mathbb{E} \Bigl[ \mathbb{P}_{M\sim \hat{\rho}_{\lambda}} (M\in\mathcal{A}_n) \Bigr]
\geq 1-\frac{\varepsilon_n}{2}
$$
where
$$
\mathcal{A}_n = \left\{M: \alpha    \Bigl( R(M) - R(M^*) \Bigr)
                    +\lambda\Bigl( -r(M) + r(M^*) \Bigr)      \leq      \log \left[\frac{d\hat{\rho}_{\lambda}}{d \pi} (M)  \right]
        + \log\frac{2}{\varepsilon_n} \right\}.
$$
Using the definition of $\hat{\rho}_\lambda $, for $M\in \mathcal{A}_n$ we have
\begin{align*}
\alpha    \Bigl( R(M) - R(M^*) \Bigr)
                    &    \leq  \lambda\Bigl( r(M) - r(M^*) \Bigr)  +       \log \left[\frac{d\hat{\rho}_{\lambda}}{d \pi} (M)  \right]
        + \log\frac{2}{\varepsilon_n}
                    \\
                    & \leq -\log\int\exp\left[-\lambda r(M)\right]\pi({\rm d}M) - \lambda r(M^*)
        + \log\frac{2}{\varepsilon_n}
        \\
        & = \lambda\Bigl( \int r(M) \hat{\rho}_{\lambda}({\rm d}M) - r(M^*) \Bigr)  +    \mathcal{K}(\hat{\rho}_\lambda,\pi)
        + \log\frac{2}{\varepsilon_n}
        \\
        & = \inf_{\rho} \left\{ \lambda\Bigl( \int r(M) \rho({\rm d}M) - r(M^*) \Bigr)  +    \mathcal{K}(\rho,\pi)
        + \log\frac{2}{\varepsilon_n} \right\}.
\end{align*}

\noindent Now, let us define
$$ \mathcal{B}_n = \left\{\forall\rho\text{, }\beta \left(-\int Rd\rho + R(M^*) \right)
+ \lambda \left( \int r d\rho - r(M^*) \right) \leq
\mathcal{K}(\rho, \pi) + \log \frac{2}{\varepsilon_n}\right\}. $$

\noindent Using~\eqref{lemma:exponential:2}, we have that
$$
\mathbb{E} \Bigl[\mathbf{1}_{\mathcal{B}_n} \Bigr]
\geq 1-\frac{\varepsilon_n}{2}.
$$
We will now prove that, if $\lambda$ is such that $\alpha>0$,
$$
\mathbb{E} \Bigl[ \mathbb{P}_{M\sim \hat{\rho}_{\lambda}} (M\in\mathcal{E}_n) \Bigr] \geq \mathbb{E} \Bigl[ \mathbb{P}_{M\sim \hat{\rho}_{\lambda}} (M\in\mathcal{A}_n)\mathbf{1}_{\mathcal{B}_n} \Bigr]
$$
which, together with
\begin{align*}
\mathbb{E} \Bigl[ \mathbb{P}_{M\sim \hat{\rho}_{\lambda}} (M\in\mathcal{A}_n)\mathbf{1}_{\mathcal{B}_n} \Bigr]
& = \mathbb{E} \Bigl[ (1-\mathbb{P}_{M\sim \hat{\rho}_{\lambda}} (M\notin\mathcal{A}_n)) (1-\mathbf{1}_{\mathcal{B}^c_n})\Bigr]
\\
& \geq \mathbb{E} \Bigl[ 1-\mathbb{P}_{M\sim \hat{\rho}_{\lambda}} (M\notin\mathcal{A}_n) - \mathbf{1}_{\mathcal{B}^c_n}
\Bigr]
\\
& \geq 1-\varepsilon_n
\end{align*}
will bring
\begin{equation*}
 \mathbb{E} \Bigl[ \mathbb{P}_{M\sim \hat{\rho}_{\lambda}} (M\in\mathcal{E}_n) \Bigr] \geq 1-\varepsilon_n.
\end{equation*}
In order to do so, assume that we are on the set $\mathcal{B}_n$, and let $M\in\mathcal{A}_n$. Then,
\begin{align*}
\alpha    \Bigl( R(M) - R(M^*) \Bigr)
        & \leq \inf_{\rho} \left\{ \lambda\Bigl( \int r(M) \rho({\rm d}M) - r(M^*) \Bigr)  +    \mathcal{K}(\rho,\pi)
        + \log\frac{2}{\varepsilon_n} \right\}
        \\
        & \leq \inf_{\rho} \left\{ \beta \Bigl( \int R(M) \rho({\rm d}M) - R(M^*) \Bigr)  +   2 \mathcal{K}(\rho,\pi)
        + 2 \log\frac{2}{\varepsilon_n} \right\}
\end{align*}
that is,
$$
R(M) - R(M^*) \leq \inf_{\rho \in \mathcal{P}(\mathbb{R}^{m\times p}) } \frac{ \beta \left[\int Rd\rho -
R(M^*) \right] + 2 \left[
\mathcal{K}(\rho, \pi) + \log \frac{2}{\varepsilon} \right] } {
\alpha  }
$$
or, rewriting it in terms of norms,
$$
 \left\|\Pi_C(XM) -XM^* \right\|_{F,\Pi}^2
\\
\leq 
\inf_{\bar{M}\in\mathbb{R}^{p\times m} } \frac{ \beta  \int \|XM-X M^* \|_{F,\Pi}^2 \rho_{\bar{M}} ({\rm d}M) + 2 \left[
\mathcal{K}(\rho_{\bar{M}}, \pi) + \log \frac{2}{\varepsilon} \right] } {
\alpha  }.
$$
We upper-bound the right-hand side exactly as in the proof of Theorem~\ref{thrm_contraction}, this gives
$M\in\mathcal{E}_n$.

\end{proof}

\newpage
\section{Appendix: Additional simulations}
\label{ap_simu}

%tb_model 1 with replacment
\begin{table}[!h]
	\caption{Simulation results on simulated data in Setting I ($ \rho_X = 0 $) with replacement ($ \ell = 100,  p = 8, m = 12 $ and $r = 2 $) for different methods, with their standard error in parentheses.  (Est: average of estimation error; Pred: average of prediction error).}
	\centering
	\small
	\makebox[\textwidth]{
		\begin{tabular}{ | p{11mm}|c | c|c || c|c|c | } 
			\toprule
			& \multicolumn{3}{ c|| }{observed 80\% without replacement} 
			& \multicolumn{3}{ c | }{observed 80\% with replacement} 
			\\
			Errors    	& LMC & MALA & mRRR & LMC & MALA & mRRR
			\\ 
			\hline
	Est 	& 0.165 (0.026) & 0.163 (0.026) & 0.072 (0.038)
			& 0.190 (0.030) & 0.188 (0.030) & NA
			\\ 
	Pred 	& 2.234 (0.236) & 2.233 (0.234) & 2.099 (0.216)
			& 0.222 (0.044) & 0.220 (0.043) & NA 
			\\  
			\midrule
			& \multicolumn{3}{ c || }{observed 50\% without replacement} 
			& \multicolumn{3}{ c | }{observed 50\% with replacement} 
			\\ 
			\hline
	Est 	& 0.293 (0.052) & 0.288 (0.051) & 0.315 (0.278)
			& 0.341 (0.071) & 0.336 (0.071) & NA
			\\
	Pred  	& 2.322 (0.160) & 2.313 (0.157) & 2.315 (0.338)
			& 0.397 (0.097) & 0.391 (0.096) & NA
			\\  
			\midrule
			& \multicolumn{3}{ c || }{observed 20\% without replacement} 
			& \multicolumn{3}{ c | }{observed 20\% with replacement} 
			\\ 
			\hline
	Est 	& 14.41 (26.59) & 1.689 (0.507) & 1.707 (0.614)
			& 8.948 (9.350) & 2.280 (0.927) & NA
			\\
	Pred  	& 19.79 (32.99) & 3.966 (0.632) & 3.718 (0.653)
			& 10.76 (11.39) & 2.651 (1.125) & NA
			\\ \bottomrule
		\end{tabular}
	}
	\label{tb_m_1_wr}
\end{table}

%tb_model 3 with replacement
\begin{table}[!h]
	\caption{Simulation results on simulated data in Setting III, approximate low-rank model, with replacement for different methods, with their standard error in parentheses. (Est: average of estimation error; Pred: average of prediction error).}
	\centering
	\small
	\makebox[\textwidth]{
		\begin{tabular}{ | p{11mm} | c| c| c || c| c| c | } 
			\toprule
			& \multicolumn{3}{ c|| }{observed 80\% without replacement} 
			& \multicolumn{3}{ c | }{observed 80\% with replacement} 
			\\
			Errors    	& LMC & MALA & mRRR 
			& LMC & MALA & mRRR
			\\ \midrule
	Est 	& 0.157 (0.026) & 0.156 (0.025) & 0.313 (0.671)
			& 0.185 (0.026) & 0.183 (0.027) & NA
			\\ 
	Pred  	& 2.184 (0.220) & 2.182 (0.218) & 2.277 (0.552)
			& 0.215 (0.038) & 0.214 (0.039) & NA
		
			\\ 
			\midrule
			& \multicolumn{3}{ c || }{observed 50\% without replacement} 
			& \multicolumn{3}{ c | }{observed 50\% with replacement} 
			\\ \midrule
	Est 	& 0.286 (0.050) & 0.281 (0.049) & 0.563 (0.660)
			& 0.340 (0.062) & 0.337 (0.061) & NA
			\\
	Pred  	& 2.304 (0.182) & 2.298 (0.103) & 2.565 (0.733)
			& 0.397 (0.084) & 0.392 (0.082) & NA
			\\ 
			\midrule
			& \multicolumn{3}{ c || }{observed 20\% without replacement} 
			& \multicolumn{3}{ c | }{observed 20\% with replacement} 
			\\ \midrule
	Est 	& 7.766 (15.18) & 1.708 (0.683) & 2.985 (1.971)
			& 7.049 (6.238) & 2.207 (0.803) & NA
			\\
	Pred 	& 11.51 (18.81) & 3.985 (0.907) & 5.097 (2.069)
			& 8.473 (7.611) & 2.569 (0.980) & NA
			\\ 
			\bottomrule
		\end{tabular}
	}
	\label{tb_m_3_wr}
\end{table}

\clearpage
\begin{footnotesize}
\bibliographystyle{abbrv}

\end{footnotesize}

\end{document}